%%%%%%%%%%%%%%% COMMENTS %%%%%%%%%%%%%%%
%
%
%
%
%
%
%
%
%
%
%
%%%%%%%%%%%%%%%%%%%%%%%%%%%%%%%%%%%%%%%%%

\documentclass[12pt]{article}

\usepackage[a4paper]{geometry}
\usepackage{amssymb}
\usepackage{amsmath}
\usepackage{amsthm}

%\usepackage{color}

% AJ addition
%\usepackage{drafthead}
%\usepackage{showkeys}
% end AJ addition

\theoremstyle{plain}
\newtheorem{theorem}{Theorem}[section]
\newtheorem{proposition}[theorem]{Proposition}
\newtheorem{lemma}[theorem]{Lemma}
\newtheorem{corollary}[theorem]{Corollary}

\theoremstyle{definition}

\newtheorem*{memo*}{MEMO}

\theoremstyle{remark}

\newtheorem*{remark*}{Remark}

\newtheorem*{remarks*}{Remarks}

\numberwithin{equation}{section}

\title{Hypergeometric expression for the resolvent of the discrete Laplacian in low dimensions}
\author{Kenichi {\scshape Ito}\footnote{Graduate School of Mathematical Sciences, 
The University of Tokyo, 3-8-1 Komaba, Meguro-ku, Tokyo 153-8914, Japan.
E-mail: \texttt{ito@ms.u-tokyo.ac.jp}. 
}
\ \& 
Arne {\scshape Jensen}\footnote{Department of Mathematical Sciences,
Aalborg University, Skjernvej 4A, DK-9220 Aalborg \O{}, Denmark.
E-mail: \texttt{matarne@math.aau.dk}. 
}}
\date{}

\begin{document}
\allowdisplaybreaks
\maketitle

\begin{abstract}
We present an explicit formula for the resolvent of the discrete Laplacian on the square lattice,
and compute its asymptotic expansions around thresholds in low dimensions. 
As a by-product we obtain a closed formula 
for the fundamental solution to the discrete Laplacian. 
For the proofs we express the resolvent in a general dimension in terms of the 
Appell--Lauricella hypergeometric function of type $C$ outside a disk encircling the spectrum.
In low dimensions it reduces to a generalized hypergeometric function, 
for which certain transformation formulas are available for the desired expansions. 
\end{abstract} 

%\tableofcontents

\section{Introduction}\label{191229}

Let $d\in\mathbb N=\{1,2,\ldots\}$,
and define $G\colon (\mathbb C\setminus [0,4d])\times \mathbb Z^d\to\mathbb C$ as 
\begin{align}
\begin{split}
G(z,n)
&=
(2\pi)^{-d}\int_{\mathbb T^d}\frac{\mathrm e^{\mathrm in\theta}}{2d-2\cos\theta_1-\dots-2\cos\theta_d-z}\,\mathrm d\theta
.
\end{split}
\label{11.4.17.3.5b}
\end{align}
We are interested in the asymptotic behavior of $G(z,n)$ 
as $z$ approaches one of the \emph{thresholds} $0,4,8,\cdots,4d\in [0,4d]$.
In fact, the main results of the paper present the asymptotic expansions for dimensions $d=1,2$. 
%Although the expression \eqref{11.4.17.3.5b} is concrete, 
%it is more difficult than it seems to read off its asymptotics except for $d=1$,
%since the thresholds are exactly the critical points of denominator of the integrand. 
In our previous paper \cite{IJ} we explicitly determined 
all the singular parts of expansions for all the dimensions
in terms of the \emph{Appell--Lauricella hypergeometric function of type $B$}, $F^{(d)}_B$: 
For 
each threshold $4q$, $q=0,\dots,d$, there exist functions 
$\mathcal E_q(z,n)$ and $\mathcal F_q(z,n)$
analytic in $|z-4q|<4$ such that 
\begin{align}
G(z,n)=\mathcal E_q(z,n)+f_q(z)\mathcal F_q(z,n)
\label{191226}
\end{align}
where
$$f_q(z)=\begin{cases}
(z-4q)^{(d-2)/2}&\text{if $d$ is odd},\\
(z-4q)^{(d-2)/2}\log(z-4q)&\text{if $d$ is even}.
\end{cases}$$ 
Moreover, $\mathcal F_q(z,n)$ is explicitly written in terms of $F^{(d)}_B$,
see Appendix~\ref{190831}.
However, thus far there seem to be no explicit results on $\mathcal E_q(z,n)$ in the literature.

The above problem is originally motivated by  the resolvent kernel of 
the discrete Laplacian. 
Define $H_0\colon\ell^2(\mathbb Z^d)\to \ell^2(\mathbb Z^d)$ as, 
for any $u\in \ell^2(\mathbb Z^d)$, 
\begin{equation*}%\label{standard}
(H_0u)[n]=\sum_{j=1}^d\bigl(2u[n]-u[n+e_j]-u[n-e_j]\bigr)
\ \ \text{for } n\in\mathbb Z^d,
\end{equation*}
where $\{e_j\}_{j=1,\dots,d}\subset \mathbb Z^d$ is the standard basis.
Then the operator $H_0$ is bounded and self-adjoint on the Hilbert space $\ell^2(\mathbb Z^d)$,
and its spectrum is given by $\sigma(H_0)=[0,4d]$.
It is well known, see e.g.\ \cite{IJ}, that its resolvent
$R_0(z)=(H_0-z)^{-1}$ is a convolution operator by $G(z,\cdot)$: 
For any $z\in \mathbb C\setminus [0,4d]$ and $u\in \ell^2(\mathbb Z^d)$ 
$$R_0(z)u=G(z,\cdot)*u.$$

The asymptotic expansion of a \emph{perturbed} resolvent around a threshold
determines the time-decay rate of the Schr\"odinger propagator,
and, in connection with \emph{threshold resonances}, has been widely studied 
in the Schr\"odinger theory on the Euclidean space. 
The  results basically rely on the explicit expansion of the \emph{free} resolvent,
see \cite{JN1,JN2} for the standard strategy. 
However, in multi-dimensional discrete spaces, almost nothing explicit is known for the free resolvent,
and hence the analysis of thresholds so far has to somehow avoid it,
see, e.g., \cite{IK,NT} for recent results. 
We remark that, on the other hand, 
a complete analysis is possible for the one-dimensional discrete case \cite{IJ1}.

As another motivation, let us introduce a probabilistic construction 
of a fundamental solution to 
$H_0$, though it is quite elementary, see also \cite{LL,MW}.
Let $X_k\in\mathbb Z^d$ be the position of a random walker on $\mathbb Z^d$ 
at time $k\in\mathbb N_0:=\{0\}\cup\mathbb N$, starting from $0\in\mathbb Z^d$ at time $k=0$.
Then the expectation of the number of times that the walker visits $n\in \mathbb Z^d$ is given by 
\begin{align*}
\mathbb E[n]=\sum_{k=0}^\infty P(X_k=n),
\end{align*}
and this \emph{would} provide a fundamental solution to $H_0$ up to a constant factor:
\begin{align*}
H_0\mathbb E=2d \delta_0.
\end{align*} 
In fact, $2d\mathbb E[n]$, $n\neq 0$, should coincide with sum of expectations of random walkers
starting from $\pm e_1,\dots,\pm e_d\in\mathbb Z^d$,
while  
$2d\mathbb E[0]$ should exceed the sum by $2d$ because the walker at 
$n=0$ at time $k=0$ is excessively counted. 
However, $\mathbb E[n]$ is known to diverge for $d=1,2$, 
cf.\ \emph{recurrence and transience} of random walk, and we need a certain \emph{renormalization}. 
For that we let the walker \emph{disappear} in each unit time with probability $\epsilon\in(0,1]$, 
so that 
\begin{align*}
\mathbb E(\epsilon,n)=
\sum_{k=0}^\infty \left(1-\epsilon\right)^kP(X_k=n).
\end{align*}
This is obviously convergent for all $d\in\mathbb N$, and 
furthermore we can show that 
\begin{align}
\mathbb E(\epsilon,n)
=
\frac{2d}{1-\epsilon} G\left(\frac{-2d\epsilon}{1-\epsilon},n\right)
,
\label{190805}
\end{align}
see Appendix~\ref{191219}. 
Hence the analysis of $\mathbb E(\epsilon,n)$ as $\epsilon\to 0$ reduces to that of $G(z,n)$
as $z\to 0$. 
Note that then, according to \cite{IJ} or Appendix~\ref{190831}, 
the following renormalization should work: 
$$
\lim_{\epsilon \to +0}\left[\mathbb E(\epsilon,n)-e(\epsilon)\right]
=2d \mathcal E_0(0,n);\quad 
e(\epsilon)=
\begin{cases}
(2\epsilon )^{-1/2}& \text{if }n=1,\\
-(1/\pi)\log (4\epsilon)& \text{if }n=2,\\
0& \text{otherwise,}
\end{cases}
$$
where $\mathcal E_0$ is from \eqref{191226},
being compatible with recurrence and transience properties.

In fact, $\mathcal E_0(0,n)$ is a fundamental solution to $H_0$,
and we can deduce its expression for $d=2$ immediately from the expansion.
A fundamental solution is a basic tool in analysis 
of discretized differential equations \cite{BO,K,RS}.
There is already a practical procedure to compute its value at each $n\in\mathbb Z^2$, see e.g.\ \cite{MW,Mo},
and moreover its precise far field asymptotics can be derived directly 
from the integral expression \cite{L,Man,MW}.
However, a closed formula seems to be still missing, and it itself would be of theoretical interest.

This paper is organized as follows. 
In Section~\ref{1912261716} we present the main results of the paper,
while their proofs are given in Section~\ref{18111121b}. 
In Appendix~\ref{19121823}
we compare the singular parts of the expansions in this paper with those of the previous work \cite{IJ},
since their coincidence is not clear at a glance. 
Some non-trivial identities are obtained there. 
Appendix~\ref{191219} contains a short discussion proving \eqref{190805}.

\section{Main results}\label{1912261716}

\subsection{Generalizations of hypergeometric functions}\label{191218}

Here we recall generalizations of hypergeometric functions.
The \emph{ordinary hypergeometric function},
or simply \emph{hypergeometric function}, $F$ is defined as an analytic continuation of 
\begin{align*}
F\left(\genfrac{}{}{0pt}{}{a,b}{c};w\right)
=\sum_{k=0}^\infty\frac{(a)_k(b)_k}{(c)_k}\frac{w^k}{k!}, 
\end{align*}
where $(q)_j$ denotes the \emph{Pochhammer symbol}: 
\begin{align*}
(q)_j=\left\{\begin{array}{ll}1&\text{for }j=0,\\q(q+1)\cdots(q+j-1)&\text{for }j=1,2,\dots.\end{array}\right.
\end{align*}
One generalization we shall use is  
the \emph{generalized hypergeometric function} ${_pF_q}$ defined as an analytic continuation of 
\begin{align*}
{_pF_q}\left(\genfrac{}{}{0pt}{}{a_1,\dots,a_p}{b_1,\dots,b_q};w\right)
=\sum_{k=0}^\infty\frac{(a_1)_k\cdots(a_p)_k}{(b_1)_k\cdots(b_q)_k}\frac{w^k}{k!}.
\end{align*}
Other ones are  
the \emph{$d$-dimensional Appell--Lauricella hypergeometric functions of type $B$, $C$},
which are defined as analytic continuations of 
\begin{align*}
&
F^{(d)}_B\left(\genfrac{}{}{0pt}{}{a_1,\dots,a_d;b_1,\dots,b_d}c;w_1,\dots,w_d\right)
\\&
=\sum_{\alpha\in\mathbb N_0^d}\frac{(a_1)_{\alpha_1}\cdots(a_d)_{\alpha_d}(b_1)_{\alpha_1}\cdots(b_d)_{\alpha_d}}{(c)_{\alpha_1+\dots+\alpha_d}}
\frac{w_1^{\alpha_1}\cdots w_d^{\alpha_d}}{\alpha_1!\cdots \alpha_d!}
,
\\
&F_C^{(d)}\left(\genfrac{}{}{0pt}{}{a,b}{c_1,\dots,c_d};w_1,\dots,w_d\right)
=
\sum_{\alpha\in\mathbb N_0^d}
\frac{(a)_{\alpha_1+\dots+\alpha_d}(b)_{\alpha_1+\dots+\alpha_d}}{(c_1)_{\alpha_1}\cdots(c_d)_{\alpha_d}}
\frac{w_1^{\alpha_1}\cdots w_d^{\alpha_d}}{\alpha_1!\cdots \alpha_d!}
,
\end{align*}
respectively. 
Let us remark that it is straightforward to see
\begin{align}
F\left(\genfrac{}{}{0pt}{}{a,b}c;w\right)
={_2F_1}\left(\genfrac{}{}{0pt}{}{a,b}c;w\right)
=F^{(1)}_B\left(\genfrac{}{}{0pt}{}{a,b}c;w\right)
=F^{(1)}_C\left(\genfrac{}{}{0pt}{}{a,b}c;w\right)
.
\label{191221}
\end{align}

\subsection{Laurent expansion in general dimension}\label{19122616}

We first present the Laurent expansion of $G(z,n)$ 
outside a disk encircling the interval $[0,4d]\subset \mathbb C$. 

\begin{theorem}\label{181110}
For any $z\in\mathbb C$ with $|2d-z|>2d$ one has an expression
\begin{align*}
G(z,n)
&
=
\sum_{\alpha\in \mathbb N_0^d}
\frac{ (2|\alpha|+|n|)!}{\alpha!\prod_{j=1}^d[(\alpha_j+|n_j|)!]}
(2d-z)^{-2|\alpha|-|n|-1}
,\intertext{or equivalently,}
G(z,n)
&=
\frac{|n|!}{|n_1|!\cdots|n_d|!}
\frac1{(2d-z)^{|n|+1}}
\\&\phantom{{}={}}{}\cdot
F_C^{(d)}\biggl(\genfrac{}{}{0pt}{}{\frac{|n|+1}2,\frac{|n|+2}2}{
|n_1|+1,\dots,|n_d|+1};
\frac4{(z-2d)^2},\dots,\frac4{(z-2d)^2}\biggr)
.
\end{align*}
where $|n|=|n_1|+\dots+|n_d|$.
\end{theorem}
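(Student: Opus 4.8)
The plan is to start from the integral representation \eqref{11.4.17.3.5b} and expand the integrand in a geometric series valid in the region $|2d-z|>2d$. Write the denominator as $(2d-z)\bigl(1 - w(\theta)\bigr)$ where $w(\theta) = (2\cos\theta_1+\dots+2\cos\theta_d)/(2d-z)$; since $|2\cos\theta_1+\dots+2\cos\theta_d|\le 2d < |2d-z|$ on $\mathbb T^d$, we have $|w(\theta)|<1$ uniformly, so
\begin{align*}
G(z,n) = (2\pi)^{-d}\sum_{m=0}^\infty (2d-z)^{-m-1}\int_{\mathbb T^d}\mathrm e^{\mathrm i n\theta}\bigl(2\cos\theta_1+\dots+2\cos\theta_d\bigr)^m\,\mathrm d\theta,
\end{align*}
with the interchange of sum and integral justified by uniform (hence dominated) convergence. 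The task then reduces to evaluating the Fourier coefficient $c_m(n):=(2\pi)^{-d}\int_{\mathbb T^d}\mathrm e^{\mathrm i n\theta}(2\cos\theta_1+\dots+2\cos\theta_d)^m\,\mathrm d\theta$.

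First I would expand $(2\cos\theta_1+\dots+2\cos\theta_d)^m$ by the multinomial theorem into $\sum \binom{m}{k_1,\dots,k_d}\prod_j (2\cos\theta_j)^{k_j}$ with $k_1+\dots+k_d=m$, so that the integral factorizes over the coordinates. For each one-dimensional factor one needs $(2\pi)^{-1}\int_{\mathbb T}\mathrm e^{\mathrm i n_j\theta_j}(2\cos\theta_j)^{k_j}\,\mathrm d\theta_j$, which by writing $2\cos\theta = \mathrm e^{\mathrm i\theta}+\mathrm e^{-\mathrm i\theta}$ and the binomial theorem equals the central-type binomial coefficient $\binom{k_j}{(k_j-|n_j|)/2}$ when $k_j \ge |n_j|$ and $k_j \equiv |n_j| \pmod 2$, and $0$ otherwise. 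Substituting $k_j = 2\alpha_j + |n_j|$ (forced by the parity and range constraints, with $\alpha_j\in\mathbb N_0$) and $m = 2|\alpha|+|n|$, and assembling the multinomial and binomial factors, one gets
\begin{align*}
c_m(n) = \sum_{\substack{\alpha\in\mathbb N_0^d\\ 2|\alpha|+|n|=m}} \frac{m!}{\prod_j (2\alpha_j+|n_j|)!}\cdot\prod_j\frac{(2\alpha_j+|n_j|)!}{\alpha_j!\,(\alpha_j+|n_j|)!} = \sum_{\substack{\alpha\in\mathbb N_0^d\\ 2|\alpha|+|n|=m}} \frac{(2|\alpha|+|n|)!}{\alpha!\,\prod_j (\alpha_j+|n_j|)!}.
\end{align*}
Plugging this back and summing over $m$ (equivalently, summing freely over $\alpha\in\mathbb N_0^d$) yields the first displayed formula. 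For the second, I would factor out the $\alpha=0$ term $\tfrac{|n|!}{|n_1|!\cdots|n_d|!}(2d-z)^{-|n|-1}$ and rewrite each summand's ratio to the $\alpha=0$ term using Pochhammer symbols: the identity $(2|\alpha|+|n|)! = |n|!\,(|n|+1)_{2|\alpha|}$ together with the duplication $(|n|+1)_{2|\alpha|} = 4^{|\alpha|}\bigl(\tfrac{|n|+1}{2}\bigr)_{|\alpha|}\bigl(\tfrac{|n|+2}{2}\bigr)_{|\alpha|}$, and $(\alpha_j+|n_j|)! = |n_j|!\,(|n_j|+1)_{\alpha_j}$, turns the series into exactly the defining series of $F_C^{(d)}$ with the stated parameters and all arguments equal to $4/(z-2d)^2$.

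The main obstacle I anticipate is not any single step but the bookkeeping in the parity/range reduction: one must carefully track that the one-dimensional integrals vanish unless $k_j$ and $|n_j|$ have the same parity and $k_j\ge|n_j|$, that these constraints precisely parametrize the surviving terms by $\alpha\in\mathbb N_0^d$, and that the multinomial coefficient $m!/\prod_j k_j!$ combines cleanly with the binomial coefficients $\binom{k_j}{\alpha_j}$ so that the factorials $(2\alpha_j+|n_j|)!$ cancel. The duplication-formula step leading to the half-integer Pochhammer parameters $\tfrac{|n|+1}{2},\tfrac{|n|+2}{2}$ is also a place where a factor of $4^{|\alpha|}$ must be matched against the $(z-2d)^{-2|\alpha|}$ to produce the argument $4/(z-2d)^2$; getting the sign and the exponent right (note $(2d-z)^{-2|\alpha|} = (z-2d)^{-2|\alpha|}$) requires care but is routine.
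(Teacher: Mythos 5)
Your proposal is correct and follows essentially the same route as the paper: expand the resolvent integrand in a geometric series for $|2d-z|>2d$, factor the torus integral into one-dimensional Fourier integrals of $(2\cos\theta_j)^{k_j}$ evaluated by the binomial theorem, and reparametrize via the parity constraint $k_j=2\alpha_j+|n_j|$; the passage to $F_C^{(d)}$ via $(2|\alpha|+|n|)!=|n|!\,(|n|+1)_{2|\alpha|}$ and the duplication formula is exactly what the paper treats as immediate from the definition. The only cosmetic difference is that you sum over the total degree $m$ and then apply the multinomial theorem, whereas the paper expands directly as a multi-index series with coefficients $|\alpha|!/\alpha!$.
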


Despite the explicit Laurent expansion 
it is highly non-trivial to re-expand it around each threshold for $d\ge 2$.
Transformation formulas for $F_C^{(d)}$ are not widely studied, and 
there seem to be no transformation formulas appropriate for our purpose, as far as we are aware,
cf.\ \cite{NIST:DLMF,E,S}.
Fortunately in low dimensions, we can reduce $F_C^{(d)}$
to some of ${_pF_q}$, for which many more transformation formulas are available.
The reduction for $d=1$ is clear from \eqref{191221},
and the one for $d=2$ will be given in Corollary~\ref{1908056}.

\subsection{Endpoint thresholds in dimension one}\label{19122617}

Let us start with the $1$-dimensional case. 
We already essentially obtained the expansion of the $1$-dimensional resolvent 
in \cite[Proposition~2.1]{IJ1}, 
and we do not consider the following formulas as really new. 
However, we have not found the same closed expression in the literature,
hence it would be worth while to exhibit them here.

\begin{theorem}\label{190716}
Let $d=1$. Then for any $z\in\mathbb C\setminus[0,4]$ and $n\in\mathbb Z$
\begin{align}
\begin{split}
G(z,n)
&
=\frac{\left(-z+2-\sqrt{z(4-z)}\right)^{|n|}}{2^{|n|}\sqrt{z(4-z)}}
.
\end{split}
\label{190807}
\end{align}
In particular, for any $z\in\mathbb C\setminus[0,\infty)$ and $n\in\mathbb Z$
\begin{align}
G(z,n)
&
=
-\frac{|n|}{2}
F\left(\genfrac{}{}{0pt}{}{1+n,1-n}{\frac32};\frac{z}4\right)
+\frac1{2\sqrt{-z}}
F\left(\genfrac{}{}{0pt}{}{\frac12+n,\frac12-n}{\frac12};\frac{z}4\right),
\label{19083117}
\intertext{and for any $z\in\mathbb C\setminus(-\infty,4]$ and $n\in\mathbb Z$}
\begin{split}
G(z,n)&
=
-\frac{(-1)^{n+1}|n|}{2}
F\left(\genfrac{}{}{0pt}{}{1+n,1-n}{\frac32};\frac{4-z}4\right)
\\&\phantom{{}={}}{}
+\frac{(-1)^{n+1}}{2\sqrt{z-4}}
F\left(\genfrac{}{}{0pt}{}{\frac12+n,\frac12-n}{\frac12};\frac{4-z}4\right)
.
\end{split}
\label{19083118}
\end{align}
\end{theorem}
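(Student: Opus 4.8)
\textbf{Proof proposal for Theorem~\ref{190716}.}
The plan is to establish the closed formula \eqref{190807} first and then derive \eqref{19083117} and \eqref{19083118} as reformulations via standard quadratic hypergeometric identities. For \eqref{190807}, the most transparent route is a direct residue computation: writing $w=\mathrm e^{\mathrm i\theta_1}$ on the unit circle, the defining integral \eqref{11.4.17.3.5b} for $d=1$ becomes, after the substitution $2-2\cos\theta_1 = 2 - w - w^{-1}$,
\begin{align*}
G(z,n) = \frac1{2\pi\mathrm i}\oint_{|w|=1} \frac{w^{|n|}}{-w^2 + (2-z)w - 1}\,\mathrm dw,
\end{align*}
using that $G(z,n)$ depends only on $|n|$ by symmetry. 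The denominator factors as $-(w-w_+)(w-w_-)$ with $w_\pm = \tfrac12\bigl(2-z \pm \sqrt{z(4-z)}\bigr)$ and $w_+w_- = 1$; for $z\in\mathbb C\setminus[0,4]$ exactly one root, say $w_-$, lies strictly inside the unit disk (this needs the branch of $\sqrt{z(4-z)}$ to be pinned down, e.g.\ by agreement with the Laurent expansion of Theorem~\ref{181110} for large $|z|$). Evaluating the residue at $w=w_-$ gives $G(z,n) = w_-^{|n|}/(w_+-w_-) = \bigl(-z+2-\sqrt{z(4-z)}\bigr)^{|n|}\big/\bigl(2^{|n|}\sqrt{z(4-z)}\bigr)$, which is \eqref{190807}.

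An alternative to the contour integral, which keeps everything within the paper's hypergeometric framework, is to specialize Theorem~\ref{181110} to $d=1$: there $F_C^{(1)} = {_2F_1}$ by \eqref{191221}, so $G(z,n) = \binom{|n|}{0}(2-z)^{-|n|-1}\,{_2F_1}\bigl(\tfrac{|n|+1}2,\tfrac{|n|+2}2;|n|+1;4/(z-2)^2\bigr)$, and then apply the classical quadratic transformation ${_2F_1}(a,a+\tfrac12;2a;\zeta) = (1-\zeta)^{-1/2}\bigl(\tfrac{1+\sqrt{1-\zeta}}2\bigr)^{1-2a}\cdot 2^{2a-1}$-type formula (a closed form, since this ${_2F_1}$ with these parameters is elementary) to recover \eqref{190807}. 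Either way the outcome is the same; I would present the residue computation as primary since it is shortest.

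For \eqref{19083117}, the idea is to expand the closed expression \eqref{190807} around $z=0$. Write $\sqrt{z(4-z)} = 2\sqrt{-z}\sqrt{1-z/4}\cdot(\text{sign/branch factor})$ and $-z+2-\sqrt{z(4-z)} = 2\bigl(1 - \tfrac z4 - \tfrac{\sqrt{-z}}{2}\sqrt{1-z/4}\cdot c\bigr)$ for an appropriate constant; raising to the power $|n|$ and dividing produces, after separating even and odd powers of $\sqrt{-z}$, two power series in $z$. One then recognizes these as the two stated ${_2F_1}$'s: the series with parameters $(\tfrac12+n,\tfrac12-n;\tfrac12)$ is the even part and $(1+n,1-n;\tfrac32)$ the odd part. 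Rather than matching coefficients by brute force, the cleaner argument is to invoke the known elementary evaluations: ${_2F_1}(\tfrac12+n,\tfrac12-n;\tfrac12;\sin^2\varphi) = \cos(2n\varphi)$ (Chebyshev) and ${_2F_1}(1+n,1-n;\tfrac32;\sin^2\varphi) = \sin(2n\varphi)/(2n\sin\varphi\cos\varphi)$, valid for integer $n$. Setting $\sin^2\varphi = z/4$, so that $2\cos\varphi = \sqrt{4-z}$, $2\mathrm i\sin\varphi\cdot(\text{branch}) = \sqrt{-z}$, and $\mathrm e^{\pm 2\mathrm i\varphi}$ corresponds to $\tfrac12\bigl(2-z\pm\mathrm i\sqrt{z(4-z)}\bigr)/$ suitable normalization, the right-hand side of \eqref{19083117} collapses to $\mathrm e^{-2\mathrm i|n|\varphi}/(2\mathrm i\sin 2\varphi)\cdot(\text{branch})$, which is exactly \eqref{190807}. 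Formula \eqref{19083118} follows from \eqref{19083117} by the symmetry $G(z,n)$ under $z\mapsto 4-z$ combined with $n\mapsto$ the reflection that contributes the sign $(-1)^{n+1}$ — concretely, substituting $w\mapsto -w$ in the contour integral sends $z\mapsto 4-z$ and $w^{|n|}\mapsto(-1)^{|n|}w^{|n|}$ while flipping which root is interior, which accounts for both the argument change $z/4\mapsto(4-z)/4$ and the prefactor sign.

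The main obstacle, and the only genuinely delicate point, is branch tracking: $\sqrt{z(4-z)}$, $\sqrt{-z}$, and $\sqrt{z-4}$ each require a consistent choice of cut, and one must verify that (i) the chosen branch of $\sqrt{z(4-z)}$ is the one for which $|w_-|<1$ throughout $\mathbb C\setminus[0,4]$, (ii) the factorization $\sqrt{z(4-z)} = \sqrt{-z}\sqrt{4-z}$ holds with the cuts as used in \eqref{19083117} on $\mathbb C\setminus[0,\infty)$, and the analogous identity with $\sqrt{z-4}$ on $\mathbb C\setminus(-\infty,4]$, and (iii) the trigonometric substitution $\sin^2\varphi = z/4$ is compatible with all of these. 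Each is checkable by evaluating at one convenient point (say $z<0$ real, where everything is manifestly real and positive) and using connectedness of the relevant slit domain, but it should be done explicitly. Everything else is bookkeeping with the two elementary ${_2F_1}$ evaluations and the $z\mapsto 4-z$ symmetry.
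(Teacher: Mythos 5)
Your proposal is correct in substance, but it takes a genuinely different route from the paper at each step. For \eqref{190807} the paper does not use a contour integral: it combines Proposition~\ref{190715} (the representation $G(z,n)=\int_0^\infty \mathrm e^{-t(2-z)}I_{n}(2t)\,\mathrm dt$ for $d=1$) with the known Laplace transform \eqref{19071523} of the modified Bessel function; your residue computation is the more elementary classical argument and is sound, with the branch pinned down exactly as you indicate (the interior root, normalized for instance by matching the Laurent expansion of Theorem~\ref{181110} at large $|z|$). For \eqref{19083117} the paper starts from Theorem~\ref{181110} with \eqref{191221} and pushes through a chain of DLMF connection and quadratic transformation formulas ((15.8.5), (15.8.1), (15.8.18), (15.8.20)); your trigonometric verification against \eqref{190807} is essentially the alternative the paper itself sketches in a remark (via Chebyshev-type evaluations), and it buys a shorter, self-contained check at the price of the branch bookkeeping you already flag as the delicate point. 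One slip to correct: the evaluation you quote should read $F\bigl(\tfrac12+n,\tfrac12-n;\tfrac12;\sin^2\varphi\bigr)=\cos(2n\varphi)/\cos\varphi$ (DLMF (15.4.12)), not $\cos(2n\varphi)$; with this correction, together with $F\bigl(1+n,1-n;\tfrac32;\sin^2\varphi\bigr)=\sin(2n\varphi)/(2n\sin\varphi\cos\varphi)$, the right-hand side of \eqref{19083117} does collapse to $\mathrm e^{-2\mathrm i|n|\varphi}/(2\mathrm i\sin 2\varphi)$ as you claim, which matches \eqref{190807} under the consistent branch. Finally, your derivation of \eqref{19083118} from the reflection identity $G(z,n)=(-1)^{n+1}G(4-z,n)$ (the substitution $\theta\mapsto\theta+\pi$ in \eqref{11.4.17.3.5b}) is arguably cleaner than the paper's treatment, which re-runs the transformation argument and tracks the extra factor $(-1)^{n+1}$ through the choice of branch.
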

\begin{remarks*}
\begin{enumerate}
\item
Here and below the branch of $\sqrt w$ is the principal one
with the cut along the negative real axis,
i.e., $\mathop{\mathrm{Re}}\sqrt w>0$ for $w\in \mathbb C\setminus (-\infty,0]$.
\item
It is clear that the zeroth order terms of the analytic parts of \eqref{19083117} and \eqref{19083118}
are fundamental solutions to $H_0$ and $H_0-4$, respectively. 
\end{enumerate}
\end{remarks*}

\subsection{Embedded threshold in dimension two}\label{19122618}

On the $2$-dimensional lattice $\mathbb Z^2$ we have different types of 
expansions for the embedded threshold $z=4$ and the endpoint ones $z=0,8$.  
The former is completely and clearly determined as follows.

\begin{theorem}\label{19083122}
Let $d=2$. For any $|z-4|<4$ with $\mathop{\mathrm{Im}}z>0$ and $n\in\mathbb Z^2$
with $|n|$ even
\begin{align*}
%\begin{split}
G(z,n)
&=
(-1)^{\max\{|n_1|,|n_2|\}}\frac{|n_1+n_2|}2\frac{|n_1-n_2|}2\frac{z-4}{4}
\\&\phantom{{}={}}{}\cdot
{_4F_3}\left(
\genfrac{}{}{0pt}{}{\frac{2+n_1+n_2}2,\frac{2-n_1-n_2}2,\frac{2+n_1-n_2}2,\frac{2-n_1+n_2}2
}{1,\frac32,\frac32}; 
\frac{(z-4)^2}{16}\right)
\\&\phantom{{}={}}{}
+\frac{\mathrm i(-1)^{\max\{|n_1|,|n_2|\}}}{4\pi}
\sum_{k=0}^\infty
\frac{
\left(\frac{1+n_1+n_2}2\right)_k
\left(\frac{1+n_1-n_2}2\right)_k
\left(\frac{1-n_1+n_2}2\right)_k
\left(\frac{1-n_1-n_2}2\right)_k
}{
k!k!\left(\frac12\right)_k\left(\frac12\right)_k
}
\\&\phantom{{}={}+{}}{}
\cdot 
\left(\frac{z-4}4\right)^{2k}
\biggl[
2\psi(1+k)
+2\psi\left(\frac12+k\right)
-\psi\left(\frac{1+n_1+n_2}2+k\right)
\\&\phantom{{}={}+\left(\frac{z-4}4\right)^{2k}\cdot \biggl[}{}
-\psi\left(\frac{1+n_1-n_2}2+k\right)
-\psi\left(\frac{1-n_1+n_2}2+k\right)
\\&\phantom{{}={}+\left(\frac{z-4}4\right)^{2k}\cdot \biggl[}{}
-\psi\left(\frac{1-n_1-n_2}2+k\right)
-\log\left(-\frac{(z-4)^2}{16}\right)
\biggr],
%\end{split}%\label{19083119}
\intertext{and 
for any $|z-4|<4$ with $\mathop{\mathrm{Im}}z>0$ and $n\in\mathbb Z^2$ with $|n|$ odd} 
%\begin{split}
G(z,n)
&=
\frac{(-1)^{\max\{|n_1|,|n_2|\}}}4{_4F_3}\left(\genfrac{}{}{0pt}{}{
\frac{1+n_1+n_2}{2},\frac{1+n_1-n_2}{2},\frac{1-n_1+n_2}{2},\frac{1-n_1-n_2}{2}}{1,\frac12,\frac12}; 
\frac{(z-4)^2}{16}\right)
\\&\phantom{{}={}}{}
+\frac{\mathrm i(-1)^{\max\{|n_1|,|n_2|\}}}{\pi}\frac{|n_1+n_2|}2\frac{|n_1-n_2|}2\frac{z-4}4
\\&\phantom{{}={}+{}}{}\cdot
\sum_{k=0}^\infty
\frac{
\left(\frac{2+n_1+n_2}2\right)_{k}
\left(\frac{2+n_1-n_2}2\right)_{k}
\left(\frac{2-n_1+n_2}2\right)_k
\left(\frac{2-n_1-n_2}2\right)_k
}{
k!k!\left(\frac32\right)_k\left(\frac32\right)_k
}
\\&\phantom{{}={}+{}}{}\cdot
\left(\frac{z-4}4\right)^{2k}
\biggl[
2\psi\left(1+k\right)
+2\psi\left(\frac32+k\right)
-\psi\left(\frac{2+n_1+n_2}2+k\right)
\\&\phantom{{}={}+{}}\phantom{{}\cdot\left(\frac{z-4}4\right)^{2k+1}\biggl[} {}
-\psi\left(\frac{2+n_1-n_2}2+k\right)
-\psi\left(\frac{2-n_1+n_2}2+k\right)
\\&\phantom{{}={}+{}}\phantom{{}\cdot\left(\frac{z-4}4\right)^{2k+1}\biggl[} {}
-\psi\left(\frac{2-n_1-n_2}2+k\right)
-\log\left(-\frac{(z-4)^2}{16}\right)
\biggr]
.
%\end{split}%\label{19083120}
\end{align*}
\end{theorem}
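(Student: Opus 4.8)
The plan is to combine the explicit Laurent series of Theorem~\ref{181110} with the reduction of Corollary~\ref{1908056} and then analytically continue the resulting generalized hypergeometric function from the exterior region $|2d-z|>2d$ into the disc around $z=4$ by means of its connection formula between $0$ and $\infty$. Setting $d=2$ in Theorem~\ref{181110}, the parameters entering are $a=\tfrac{|n|+1}{2}$, $b=\tfrac{|n|+2}{2}$ with denominator parameters $c_1=|n_1|+1$, $c_2=|n_2|+1$; since here $c_1+c_2-1=2a$, Corollary~\ref{1908056} collapses $F_C^{(2)}$ into a ${_4F_3}$ whose four numerator parameters coincide in pairs, namely $a,a,b,b$ (equivalently $\tfrac{1\pm n_1\pm n_2}{2}$), in the variable $\xi:=\tfrac{16}{(z-4)^2}$. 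This yields, for $|z-4|>4$, a closed ${_4F_3}$-expression for $G(z,n)$. I would split at once into the cases $|n|$ even and $|n|$ odd, because for $|n|$ even $a$ is a half-integer and $b$ an integer while for $|n|$ odd the roles are interchanged; this dichotomy is responsible for the two denominator patterns $1,\tfrac32,\tfrac32$ and $1,\tfrac12,\tfrac12$ appearing in the statement.

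To reach $|z-4|<4$ one needs the behaviour of this ${_4F_3}$ near $\xi=\infty$, which I would extract from the Mellin--Barnes representation by closing the contour to the left. The poles lie at $s=-a-k$ and $s=-b-k$, $k\in\mathbb N_0$, and, because the numerator $\Gamma$-factors come in coincident pairs, they are a priori double poles. For the exponent family attached to the half-integer parameters (exponent $-a$ when $|n|$ is even, $-b$ when $|n|$ is odd) the shifted denominator parameters $b_j-a$ (resp.\ $b_j-b$) stay half-integers, so $\Gamma(b_j+s)$ has no pole there, the double pole persists, and its residue contributes $\log\!\bigl(-\tfrac{(z-4)^2}{16}\bigr)$ together with the combination $2\psi(1+k)+2\psi\!\bigl(\tfrac12+k\bigr)-\sum\psi(\,\cdot+k)$ of digamma values, the doubled $\psi$-arguments reflecting the repeated denominator parameters (including the parameter $1$, duplicated by the implicit $(1)_k=k!$): this is precisely the bracket in the theorem. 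For the other exponent family the shifted denominator parameters are integers and, for every $k$, one of them is non-positive, so the corresponding $\Gamma(b_j+s)^{-1}$ vanishes, one order of the double pole is cancelled, only a simple pole survives, and one obtains the clean ${_4F_3}$ with the polynomial prefactor $\tfrac{|n_1+n_2|}{2}\tfrac{|n_1-n_2|}{2}\tfrac{z-4}{4}$ (respectively its $|n|$-odd counterpart starting at $(z-4)^0$). The prefactor $(2d-z)^{-|n|-1}=(-1)^{|n|+1}(z-4)^{-|n|-1}$ of Theorem~\ref{181110} then shifts the powers of $z-4$ into the displayed form.

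The branch and the constants are pinned down as follows. The hypothesis $\mathop{\mathrm{Im}}z>0$ fixes $\arg(z-4)\in(0,\pi)$, hence the determination of $(-\xi)^{s}$ used in the Mellin--Barnes inversion, hence the boundary value of $G(z,n)$ approached from the upper half-plane; the factors $\tfrac1{4\pi}$, $\tfrac1\pi$ come from evaluating the $\Gamma$-ratios in the connection coefficients by the reflection formula, the $\mathrm i$ comes from the phase $\mathrm e^{\pm\mathrm i\pi(\,\cdot\,)}$ produced by $(-1)^{-a}$ or $(-1)^{-b}$ with a half-integer exponent, and the overall sign $(-1)^{\max\{|n_1|,|n_2|\}}$ is isolated by combining these with the binomial factor $\tfrac{|n|!}{|n_1|!\,|n_2|!}$ of Theorem~\ref{181110}. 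Matching powers of $z-4$ term by term then completes each parity case.

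I expect the real obstacle to be the bookkeeping of this logarithmic continuation: deciding, for each $k$, whether the pole at $s=-a-k$ (or $s=-b-k$) is double or, after the $\Gamma(b_j+s)^{-1}$ cancellation, only simple---this being additionally delicate when one of the numerator parameters $\tfrac{1-n_1-n_2}{2}$, etc., is itself a non-positive integer, so that one of the output ${_4F_3}$'s terminates and some nominal poles disappear---and then collecting all the signs and powers of $\mathrm i$ so as to isolate the single factor $(-1)^{\max\{|n_1|,|n_2|\}}$. A good final consistency check is to match the singular part obtained here against the $F_B^{(2)}$-form of \cite{IJ} through \eqref{191226}, which is precisely what Appendix~\ref{19121823} carries out.
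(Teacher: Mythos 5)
Your route is genuinely different from the paper's. The paper never analytically continues the off-diagonal ${_4F_3}$ of Corollary~\ref{1908056} at all: it first computes the diagonal values $\mathcal P_0(m)$ (Proposition~\ref{191230}), where the diagonal reduction collapses to a ${_2F_1}$ and the logarithmic connection formulas \cite[(15.8.8), (15.8.11)]{NIST:DLMF} are available, and then treats the displayed right-hand sides of Theorem~\ref{19083122} as a \emph{candidate}, verifying that they satisfy the symmetry \eqref{1909039} and the discrete Helmholtz equation \eqref{1909038} with the same diagonal values; uniqueness of the solution of this difference problem finishes the proof, and the verification reduces to elementary Pochhammer--digamma identities for the coefficients $a_k,b_k,\widetilde a_k,\widetilde b_k$. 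What your approach would buy, if completed, is a derivation that produces the expansion rather than merely certifying a guessed one; what the paper's approach buys is that it sidesteps the logarithmic connection problem for ${_4F_3}$ entirely, which is exactly the step the authors flag as treacherous: the remark after their proof records that a direct continuation of Corollary~\ref{1908056} via \cite{SS} yields formulas \emph{missing the first terms} of the asserted identities, an inconsistency they could not trace.

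That said, as submitted your argument has a substantive gap: everything that actually generates the right-hand side --- the residue calculus at the double poles, the order-counting when lower-parameter gammas cancel one or two orders (including the degenerate cases $|n_1|=|n_2|$ and the termination for large $k$), and the phase bookkeeping under $\mathop{\mathrm{Im}}z>0$ that must output precisely $\mathrm i(-1)^{\max\{|n_1|,|n_2|\}}/(4\pi)$ together with the bracket $2\psi(1+k)+2\psi(\tfrac12+k)-\sum\psi(\cdot+k)-\log(-(z-4)^2/16)$ --- is deferred as ``bookkeeping'', and that is the entire technical content of the theorem; the history with \cite{SS} shows this is where errors hide, so it must be carried out from scratch, not assumed. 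There is also a concrete slip indicating the residue structure has not been worked through: the numerator parameters of the ${_4F_3}$ in Corollary~\ref{1908056} are $\tfrac{|n|+1}2,\tfrac{|n|+1}2,\tfrac{|n|+2}2,\tfrac{|n|+2}2$ and are \emph{not} ``equivalently $\tfrac{1\pm n_1\pm n_2}2$''; the parameters $\tfrac{1\pm n_1\pm n_2}2$ (resp.\ $\tfrac{2\pm n_1\pm n_2}2$) only emerge after the residue computation, through reflection of the lower-parameter gammas, e.g.\ $1-(|n_1|+1)+\tfrac{|n|+1}2=\tfrac{1-|n_1|+|n_2|}2$. Your parity analysis of which exponent family carries the logarithm and the observation that in the integer family one lower gamma always removes one order of the pole (two orders for large $k$, whence the terminating ${_4F_3}$) are correct and consistent with the stated formulas, so the plan is viable in principle; but note finally that your proposed consistency check against the $F^{(2)}_B$ singular parts is not an independent rigorous anchor either, since the identities of Appendix~\ref{19121823} are only verified there by computer algebra.
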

\begin{remarks*}
\begin{enumerate}
\item
Here and below the branch of $\log w$ is the principal one 
with the cut along the negative real axis,
i.e., $-\pi<\mathop{\mathrm{Im}}\log w<\pi$ for $w\in\mathbb C\setminus (-\infty,0]$. 

\item
The \emph{psi function}, or \emph{digamma function}, 
$\psi(w)$, is defined as 
\begin{align*}
\psi(w)=\frac{\Gamma'(w)}{\Gamma(w)}\ \ \text{for }w\in\mathbb C\setminus \{0,-1,-2,\ldots\},
\end{align*}
where $\Gamma$ is the gamma function, see \cite[(5.2.2)]{NIST:DLMF}. 
According to \cite[(5.4.14)]{NIST:DLMF} and \cite[(5.4.15)]{NIST:DLMF},
we have for $m=1,2,\ldots$
\begin{align*}
\psi(1+m)=-\gamma+\sum_{k=1}^m\frac1k,\quad 
\psi\left(\frac12+m\right)=-\gamma-2\log 2+2\sum_{k=1}^m\frac1{2k-1},
\end{align*}
where $\gamma$ is the Euler--Mascheroni constant.

\item
The formulas from Theorem~\ref{19083122} generalize,
e.g.,  
\cite[(36)]{Ma} to the off-diagonal vertices. See also \cite[\S 14.3 and \S 15.8]{NIST:DLMF}.

\item
The value of $G(z,n)$ at $n=0$ coincides with a branch of the 
\emph{complete elliptic integral of the first kind},
$K(w)$, up to a constant factor. In fact, for $0<|w'|<1$ with $w^2+w'^2=1$ 
\begin{align*}
K(w)=\sum_{k=0}^\infty\frac{\left(\frac12\right)_k\left(\frac12\right)_k}{k!k!}w'^{2k}
\left[\psi(1+k)-\psi\left(\frac12+k\right)-\log w'\right],
\end{align*} 
see \cite[(19.12.1)]{NIST:DLMF}. 
See also \cite{Mo}.
Hence we cannot expect 
$G(z,n)$ to be an elementary function for $d=2$ unlike \eqref{190807} for $d=1$.
\item
The corresponding fundamental solution to $H_0-4$ is given by 
\begin{align*}
%\begin{split}
\mathcal E_1(4,n)
&=
\frac{(-1)^{\max\{|n_1|,|n_2|\}}-(-1)^{\min\{|n_1|,|n_2|\}}}8
\\&\phantom{{}={}}{}
+\frac{\mathrm i[(-1)^{n_1}+(-1)^{n_2}]}{2\pi}
\left[
\log 2
-\sum_{k=1}^{|n_1+n_2|/2}\frac1{2k-1}
-\sum_{k=1}^{|n_1-n_2|/2}\frac1{2k-1}
\right]
.
%\end{split}%\label{19083120}
\end{align*}
A renormalization as mentioned in Section~\ref{191229} is not unique, 
and for example $\log 2$ may be absorbed into the singular part. 
Moreover, it is clear that $\mathop{\mathrm{Re}}\mathcal E_1(4,\cdot)$ alone  
provides a fundamental solution,
cf.\ \cite{BO},
while $\mathop{\mathrm{Im}}\mathcal E_1(4,\cdot)$ in fact is a \emph{generalized} eigenfunction
in the sense that $\mathop{\mathrm{Im}}\mathcal E_1(4,\cdot)\notin\ell^2(\mathbb Z^2)$. 

\item 
The operator $H_0-4$ is unitarily equivalent to the \emph{discrete d'Alembertian} 
$$(\square u)[n]=u[n+e_1]+u[n-e_1]-u[n+e_2]-u[n-e_2].$$
In fact, we have 
\begin{align}
U^*(H_0-4)U=\square;\quad 
(Uu)[n]=(-1)^{n_1}u[n].
\label{19122915}
\end{align}
Hence in particular we have a fundamental solution to $\square$.
\end{enumerate}
\end{remarks*}

\subsection{Endpoint thresholds in dimension two}\label{19122619}

Expansions around the endpoint thresholds $z=0,8$ are more involved.
We change variables as, for $(m,l)\in\mathbb Z^2$,  
\begin{align*}
\mathcal P(m,l)&=(-1)^{m+l}G(z,m+l,m-l),
\\
\mathcal Q(m,l)&=(-1)^{m+l}G(z,m+l+1,m-l),
\end{align*}
 suppressing the dependence on $z\in\mathbb C\setminus [0,8]$.
We further set the diagonal values  
\begin{align}
\mathcal P_0(m)=\mathcal P(m,0)=(-1)^{m}G(z,m,m).
\label{19122923}
\end{align}
By symmetry it suffices to state the result only for $(m,l)\in\mathbb N_0^2$, see \eqref{1909039}.

\begin{theorem}\label{19080717}
Let $d=2$. 
For any $z\in\mathbb C\setminus[0,8]$ with $|z(8-z)|<16$ 
and $\pm\mathop{\mathrm{Re}}(4-z)>0$
and $m\in\mathbb Z$
\begin{align*}
\mathcal P_0(m)
&
=
\pm\frac{(-1)^m}{4\pi}
\sum_{k=0}^\infty
\frac{\left(\frac12+m\right)_k\left(\frac12-m\right)_k}{k!k!}
\left(\frac{z(8-z)}{16}\right)^k
\\&\phantom{{}={}}{}
\cdot
\left[
2\psi(1+k)
-\psi\left(\frac12+m+k\right)
-\psi\left(\frac12-m+k\right)
-\log\frac{z(z-8)}{16}
\right]
,
\end{align*}
respectively.
Furthermore, for any $z\in\mathbb C\setminus [0,8]$ and $(m,l)\in\mathbb N_0^2$ 
\begin{align*}
%\begin{split}
\mathcal P(m,l)
&
=
\frac{z-4}{4}ml
{_4F_3}\left(\genfrac{}{}{0pt}{}{1+m,1-m,1+l,1-l}{1,\frac32,\frac32}; \frac{(z-4)^2}{16}\right)
\\&\phantom{{}={}}{}
+\mathcal P_0(0){_4F_3}\left(\genfrac{}{}{0pt}{}{m,-m,l,-l}{1,\frac12,\frac12}; \frac{(z-4)^2}{16}\right)
\\&\phantom{{}={}}{}
+\sum_{\mu=1}^{m}\bigl[\mathcal P_0(\mu)-\mathcal P_0(\mu-1)\bigr]
{_4F_3}\left(\genfrac{}{}{0pt}{}{1+m-\mu,\mu-m,l,-l}{1,\frac12,\frac12}; \frac{(z-4)^2}{16}\right)
\\&\phantom{{}={}}{}
+\sum_{\nu=1}^{l}\bigl[\mathcal P_0(\nu)-\mathcal P_0(\nu-1)\bigr]
{_4F_3}\left(\genfrac{}{}{0pt}{}{m,-m,1+l-\nu,\nu-l}{1,\frac12,\frac12}; \frac{(z-4)^2}{16}\right)
,
%\end{split}
%\label{19101522}
\intertext{and }
%\begin{split}
\mathcal Q(m,l)
&
=
-\frac14{_4F_3}\left(\genfrac{}{}{0pt}{}{1+m,-m,1+l,-l}{1,\frac12,\frac12}; \frac{(z-4)^2}{16}\right)
\\&\phantom{{}={}}{}
+\frac{z-4}4(2m+1)(2l+1)\mathcal P_0(0)
{_4F_3}\left(\genfrac{}{}{0pt}{}{1+m,-m,1+l,-l}{1,\frac32,\frac32}; \frac{(z-4)^2}{16}\right)
\\&\phantom{{}={}}{}
-
\frac{z-4}4(2l+1)
\sum_{\mu=-m}^{m}
\mathcal P_0(\mu)
\\&\phantom{{}={}\quad}{}
\cdot 
{_4F_3}\left(\genfrac{}{}{0pt}{}{1+m-|\mu|,|\mu|-m,1+l,-l}{1,\frac12,\frac32}; \frac{(z-4)^2}{16}\right)
\\&\phantom{{}={}}{}
-
\frac{z-4}4(2m+1)
\sum_{\nu=-l}^{l}
\mathcal P_0(\nu)
\\&\phantom{{}={}\quad}{}
\cdot 
{_4F_3}\left(\genfrac{}{}{0pt}{}{1+m,-m,1+l-|\nu|,|\nu|-l}{1,\frac12,\frac32}; \frac{(z-4)^2}{16}\right)
.
%\end{split}
%\label{1910152207}
\end{align*}
\end{theorem}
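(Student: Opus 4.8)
The plan is to treat the statement in two essentially independent parts. The first is the closed form for the diagonal values $\mathcal P_0(m)$, which I would deduce from Theorem~\ref{19083122} together with the logarithmic Gauss connection formula and an analytic continuation. The second is the representation of the general $\mathcal P(m,l)$ and $\mathcal Q(m,l)$ through the $\mathcal P_0$'s, which I would deduce from the difference equation satisfied by $G(z,\cdot)$.

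For the diagonal values, specialise Theorem~\ref{19083122} to $n=(m,m)$, so that $|n|=2|m|$ is even. Here $|n_1-n_2|/2=0$ kills the ${_4F_3}$-term, and cancelling the repeated Pochhammer symbols collapses the remaining series to
\[
G(z,m,m)=\frac{\mathrm i(-1)^m}{4\pi}\sum_{k=0}^\infty\frac{(|m|+\tfrac12)_k(\tfrac12-|m|)_k}{k!\,k!}\Bigl(\tfrac{(z-4)^2}{16}\Bigr)^k\Bigl[2\psi(1+k)-\psi(|m|+\tfrac12+k)-\psi(\tfrac12-|m|+k)-\log\bigl(-\tfrac{(z-4)^2}{16}\bigr)\Bigr]
\]
for $|z-4|<4$, $\mathop{\mathrm{Im}}z>0$. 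Up to the prefactor this is exactly the expansion about argument $1$ of ${_2F_1}\bigl(\genfrac{}{}{0pt}{}{|m|+\frac12,\frac12-|m|}{1};\,\cdot\,\bigr)$ furnished by \cite[(15.8.10)]{NIST:DLMF}; using $\Gamma(|m|+\tfrac12)\Gamma(\tfrac12-|m|)=\pi/(-1)^m$ and absorbing the discrepancy $\log(-\xi)-\log\xi=\pm\mathrm i\pi$ into one extra convergent ${_2F_1}$, one obtains
\[
G(z,m,m)=\tfrac{\mathrm i}4\,{_2F_1}\Bigl(\genfrac{}{}{0pt}{}{|m|+\frac12,\frac12-|m|}{1};\tfrac{z(8-z)}{16}\Bigr)\pm\tfrac{(-1)^m}4\,{_2F_1}\Bigl(\genfrac{}{}{0pt}{}{|m|+\frac12,\frac12-|m|}{1};\tfrac{(z-4)^2}{16}\Bigr).
\]
Applying \cite[(15.8.10)]{NIST:DLMF} once more to the second ${_2F_1}$, now expanded about argument $1$, i.e.\ about $z=0$ and $z=8$, the two $\pm\frac{\mathrm i}4$-contributions cancel and one is left precisely with the asserted series in $\tfrac{z(8-z)}{16}$ carrying $-\log\tfrac{z(z-8)}{16}$ and prefactor $\pm(-1)^m/(4\pi)$. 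This identity is obtained first on the overlap of $\{|z-4|<4,\ \mathop{\mathrm{Im}}z>0\}$ with $\{|z(8-z)|<16,\ \pm\mathop{\mathrm{Re}}(4-z)>0\}$, which is nonempty; both sides being analytic on the latter set minus $[0,8]$ and satisfying $f(\bar z)=\overline{f(z)}$, Schwarz reflection extends it to that whole region.

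For the other two families I would use \eqref{19122915}: the function $\tilde G:=U^*G(z,\cdot)$ solves $(\square-(z-4))\tilde G=\delta_0$. In the rotated coordinates $\square=\Delta_m\Delta_l$ with half-integer shifts, so $\mathcal P$ lives on the integer sublattice and $-\mathcal Q$ on the half-integer one; the equation at the half-integer points is homogeneous and reads $(4-z)\mathcal Q(m,l)=\mathcal P(m+1,l+1)-\mathcal P(m+1,l)-\mathcal P(m,l+1)+\mathcal P(m,l)$, while applying $\square+(z-4)$ and restricting to the integer sublattice (where $\square\delta_0$ vanishes) gives the closed equation $\Delta_m^2\Delta_l^2\mathcal P=(z-4)^2\mathcal P+(z-4)\delta_{(0,0)}$ in the standard three-point second differences. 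Together with $\mathcal P(m,l)=\mathcal P(l,m)$ and the decay of $G(z,\cdot)\in\ell^2(\mathbb Z^2)$ this determines $\mathcal P$ from the diagonal data $\mathcal P(m,0)=\mathcal P(0,m)=\mathcal P_0(m)$ (one propagates column by column, fixing each additive constant by decay). It then remains to verify that the stated right-hand sides solve that equation and match the diagonal. The diagonal match is automatic by telescoping, e.g.\ $\mathcal P(m,0)=\mathcal P_0(0)+\sum_{\mu=1}^m[\mathcal P_0(\mu)-\mathcal P_0(\mu-1)]=\mathcal P_0(m)$; the equation check reduces, using the contiguous relations of ${_4F_3}$ and the truncation of the Pochhammer symbols $(\pm m)_k$ and $(\pm l)_k$, to a finite list of Chu--Vandermonde- and Saalsch\"utz-type summations. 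Finally $\mathcal Q$ is read off from $\mathcal P$ via the mixed-difference identity, and the stated $\mathcal Q$-formula follows after collapsing the resulting telescoping sums in the same way; the reduction to $(m,l)\in\mathbb N_0^2$ is \eqref{1909039}. Since every factor multiplying a $\mathcal P_0$-value, and the ${_4F_3}$-term itself, is a polynomial in $z$, the $\mathcal P$- and $\mathcal Q$-formulas then hold on all of $\mathbb C\setminus[0,8]$.

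The main obstacle is the verification just described: showing that the explicit ${_4F_3}$-combinations satisfy $\Delta_m^2\Delta_l^2\mathcal P=(z-4)^2\mathcal P+(z-4)\delta_{(0,0)}$ requires organising a sizeable family of terminating-hypergeometric identities while keeping track of the several $(z-4)/4$ factors and signs linking $\mathcal P$ and $\mathcal Q$. By comparison the branch bookkeeping in the diagonal step — fixing the sign $\pm$ and the condition $\pm\mathop{\mathrm{Re}}(4-z)>0$ from the determination of $\log(-\tfrac{(z-4)^2}{16})$ reached along the continuation from $\mathop{\mathrm{Im}}z>0$ — is delicate but secondary.
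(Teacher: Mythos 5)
Your handling of the diagonal values is essentially sound and is a genuinely different route from the paper's: Proposition~\ref{191230} works directly on the ${_2F_1}$ with argument $16/(z-4)^2$ in \eqref{1908071742} via \cite[(15.8.8), (15.8.11)]{NIST:DLMF}, whereas you restrict Theorem~\ref{19083122} to $n=(m,m)$ (legitimate, since that theorem is proved independently of Theorem~\ref{19080717}) and apply the zero-balanced connection formula \cite[(15.8.10)]{NIST:DLMF} twice, cancelling the two $\pm\mathrm i\pi$ corrections; together with the identity-theorem/reflection continuation to the two components of $\{|z(8-z)|<16\}\setminus[0,8]$ this does yield the stated formula for $\mathcal P_0(m)$, at the cost of more delicate branch bookkeeping than the paper's one-step argument.

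For $\mathcal P(m,l)$ and $\mathcal Q(m,l)$, however, there is a genuine gap. Your setup is correct (the closed equation $\Delta_m^2\Delta_l^2\mathcal P=(z-4)^2\mathcal P+(z-4)\delta_{(0,0)}$ and the relation $(4-z)\mathcal Q(m,l)=\mathcal P(m+1,l+1)-\mathcal P(m+1,l)-\mathcal P(m,l+1)+\mathcal P(m,l)$ are exactly the sublattice reductions of \eqref{1909038}), but the decisive step --- verifying that the stated ${_4F_3}$ combinations satisfy that equation, and that the mixed difference of the stated $\mathcal P$ collapses to the stated $\mathcal Q$ --- is the entire content of the theorem and is left unexecuted. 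The claim that it "reduces to Chu--Vandermonde- and Saalsch\"utz-type summations" is unsubstantiated: the candidate contains site-dependent sums $\sum_{\mu=1}^{m}$ weighted by the non-polynomial data $\mathcal P_0(\mu)$, so applying $\Delta_m^2\Delta_l^2$ produces boundary terms in the data and the check must be organized as a discrete variation-of-parameters argument on the kernels; nothing of this is carried out, not even in a sample case. In addition, your uniqueness argument is defective as formulated: you characterize $\mathcal P$ by the closed equation, symmetry, diagonal data and decay, but to conclude that the candidate equals $\mathcal P$ you must know the candidate satisfies the same characterization, and its decay is precisely what is unknown before the identification; you should instead invoke the decay-free uniqueness coming from propagating \eqref{1909038}--\eqref{1909039} from the diagonal, as the paper does in the proof of Theorem~\ref{19083122}. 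Note finally that the paper avoids any verification problem for this theorem: it derives the summed recurrences of Proposition~\ref{191010} and solves them constructively by a finite Neumann series (\eqref{191015}, \eqref{19101020}) together with the Pochhammer summation Lemma~\ref{191017}, so the ${_4F_3}$ expressions are produced rather than checked; if you keep your verification strategy, the missing hypergeometric computations must actually be supplied.
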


\begin{remarks*}
\begin{enumerate}
\item
The expression for $\mathcal P_0(m)$ above is considered a simultaneous expansion around both $z=0,8$.
On the other hand, each of the above ${_4F_3}$ is a finite sum, 
since one of its upper parameters is always a non-positive integer. 
Hence the above ${_4F_3}$ in principle can be 
re-expanded around $z=0,8$,
which in turn provides expansions of $\mathcal P(m,l)$ and $\mathcal Q(m,l)$. 
Currently we do not know how we could further simplify the above expressions.

\item
The value $\mathcal P_0(0)$ coincides up to a constant factor with a branch of 
the complete elliptic integral of the first kind, see the remarks to Theorem~\ref{19083122}. 

\item
For a fundamental solution
it is rather standard to first compute the diagonal values and then proceed to 
the off-diagonal by symmetry and a recurrence relation,
see \cite{MW,Mo,BO,L} and references therein. 
Our strategy for the resolvent kernel is the same, however, 
we will solve the recurrence relation till the end. 
\end{enumerate}
\end{remarks*}

Let us exhibit a fundamental solution to $H_0$ as a corollary below.

\begin{corollary}\label{200103}
Define $E\colon \mathbb Z^2\to \mathbb R$ as,
for $n\in\mathbb Z^2$ with $|n|$ even 
\begin{align*}
%\begin{split}
E[n]
&
=
-(-1)^{n_1}\frac{|n_1+n_2|}2\frac{|n_1-n_2|}2
{_4F_3}\left(\genfrac{}{}{0pt}{}{\frac{2+n_1+n_2}2,\frac{2+n_1-n_2}2,\frac{2-n_1+n_2}2,\frac{2-n_1-n_2}2}{1,\frac32,\frac32}; 1\right)
\\&\phantom{{}={}}{}
-\frac{(-1)^{n_1}}{\pi}
\sum_{\mu=1}^{|n_1+n_2|/2}
(-1)^{\mu}
\left[
\sum_{j=1}^{\mu-1}\frac2{2j-1}
+\frac1{2\mu-1}
\right]
\\&\phantom{{}={}\quad}{}\cdot
{_4F_3}\left(\genfrac{}{}{0pt}{}{\frac{|n_1+n_2|}2+1-\mu,\mu-\frac{|n_1+n_2|}2,\frac{n_1-n_2}2,\frac{-n_1+n_2}2}{1,\frac12,\frac12}; 1\right)
\\&\phantom{{}={}}{}
-\frac{(-1)^{n_1}}{\pi}
\sum_{\nu=1}^{|n_1-n_2|/2}
(-1)^{\nu}
\left[
\sum_{j=1}^{\nu-1}\frac2{2j-1}
+\frac1{2\nu-1}
\right]
\\&\phantom{{}={}\quad}{}\cdot
{_4F_3}\left(\genfrac{}{}{0pt}{}{\frac{n_1+n_2}2,\frac{-n_1-n_2}2,\frac{|n_1-n_2|}2+1-\nu,\nu-\frac{|n_1-n_2|}2}{1,\frac12,\frac12}; 1\right)
,
%\end{split}
%\label{19101522}
\intertext{and for $n\in\mathbb Z^2$ with $|n|$ odd}
%\begin{split}
E[n]
&
=
\frac{(-1)^{\max\{|n_1|,|n_2|\}}}4{_4F_3}\left(\genfrac{}{}{0pt}{}{\frac{1+n_1+n_2}2,\frac{1+n_1-n_2}2,\frac{1-n_1+n_2}2,\frac{1-n_1-n_2}2}{1,\frac12,\frac12}; 1\right)
\\&\phantom{{}={}}{}
+\frac{(-1)^{\max\{|n_1|,|n_2|\}}|n_1-n_2|}{\pi}
\sum_{\mu=-(|n_1+n_2|-1)/2}^{(|n_1+n_2|-1)/2}
(-1)^\mu
\sum_{j=1}^{|\mu|}\frac1{2j-1}
\\&\phantom{{}={}\quad}{}
\cdot 
{_4F_3}\left(\genfrac{}{}{0pt}{}{\frac{1+|n_1+n_2|}2-|\mu|,|\mu|+\frac{1-|n_1+n_2|}2,\frac{1+n_1-n_2}2,\frac{1-n_1+n_2}2}{1,\frac12,\frac32}; 1\right)
\\&\phantom{{}={}}{}
+\frac{(-1)^{\max\{|n_1|,|n_2|\}}|n_1+n_2|}{\pi}
\sum_{\nu=-(|n_1-n_2|-1)/2}^{(|n_1-n_2|-1)/2}
(-1)^\nu
\sum_{j=1}^{|\nu|}\frac1{2j-1}
\\&\phantom{{}={}\quad}{}
\cdot 
{_4F_3}\left(\genfrac{}{}{0pt}{}{\frac{1+n_1+n_2}2,\frac{1-n_1-n_2}2,\frac{1+|n_1-n_2|}2-|\nu|,|\nu|+\frac{1-|n_1-n_2|}2}{1,\frac12,\frac32}; 1\right)
.
%\end{split}
%\label{1910152207}
\end{align*}
Then $E$ is a fundamental solution to $H_0$,
i.e., $H_0E=\delta_0$.
\end{corollary}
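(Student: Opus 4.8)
The plan is to obtain $E[n]$ as the limit of $\mathcal E_0(0,n)$, or more directly as a suitable limit of $G(z,n)$ as $z\to 0$ combined with the already-established formula \eqref{191226} for $q=0$ in dimension $d=2$. First I would recall from Theorem~\ref{19080717} that the quantities $\mathcal P(m,l)$ and $\mathcal Q(m,l)$ express $G(z,n)$ for general $n\in\mathbb Z^2$ through the change of variables $n=(m+l,m-l)$ (so $|n|$ even) and $n=(m+l+1,m-l)$ (so $|n|$ odd), up to the sign $(-1)^{m+l}$; in terms of $n$, the prefactors $m,l$ become $\tfrac{n_1+n_2}2,\tfrac{n_1-n_2}2$ and the sign becomes $(-1)^{(n_1+n_2)/2+(n_1-n_2)/2}=(-1)^{n_1}$ in the even case. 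So the first key step is to translate the formulas of Theorem~\ref{19080717} back into the $n$-variable, checking that the ${_4F_3}$ arguments $\tfrac{(z-4)^2}{16}$ become $1$ at $z=0$, and that the sign $(-1)^{n_1}$ agrees with $(-1)^{\max\{|n_1|,|n_2|\}}$ up to the terms that vanish in the limit (using that $\mathcal P_0(m)-\mathcal P_0(m-1)$ stays bounded while $\mathcal P_0(0)$ itself diverges logarithmically).

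The second and central step is to pass to the limit $z\to 0$. The subtlety is that $\mathcal P_0(0)$ diverges like $-(1/\pi)\log(z/4)$ (or, combining both endpoints, like the symmetric expression in Theorem~\ref{19080717}), so the naive limit of each term in the expressions for $\mathcal P(m,l)$ and $\mathcal Q(m,l)$ does not exist. I would exploit the contiguous-function / telescoping structure already built into Theorem~\ref{19080717}: the divergent $\mathcal P_0(0)$ appears multiplied by ${_4F_3}$'s with at least one non-positive integer upper parameter, hence by \emph{polynomials} in $\tfrac{(z-4)^2}{16}$, and I would reorganize the sums so that the coefficient of the divergent part $\mathcal P_0(0)$ collapses. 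Concretely, for the even case one rewrites $\mathcal P_0(\mu)=\mathcal P_0(0)+\sum_{\kappa=1}^{\mu}[\mathcal P_0(\kappa)-\mathcal P_0(\kappa-1)]$ inside the single sum $\sum_{\mu=-m}^{m}$ appearing in $\mathcal Q$, and uses a hypergeometric summation identity (of Vandermonde/Chu type, valid because the series terminate) to show that $\sum_{\mu=-m}^m {_4F_3}(\dots)$ with the relevant parameters sums to something proportional to a single ${_4F_3}$, so that the multiplier of the divergent $\mathcal P_0(0)$ exactly cancels against the first explicit term. What then survives are the convergent differences $\mathcal P_0(\mu)-\mathcal P_0(\mu-1)$, whose limits as $z\to 0$ I would read off from the explicit $\mathcal P_0(m)$ formula: using $\psi(1+k)-\tfrac12\psi(\tfrac12+k+m)-\tfrac12\psi(\tfrac12+k-m)$ and the digamma values in the Remarks to Theorem~\ref{19083122}, one gets $\lim_{z\to0}[\mathcal P_0(\mu)-\mathcal P_0(\mu-1)]=-\tfrac{(-1)^{\mu}}{\pi}\bigl[2\sum_{j=1}^{\mu-1}\tfrac1{2j-1}+\tfrac1{2\mu-1}\bigr]$ for the even case, and the analogous $-\tfrac{(-1)^\mu}{\pi}\cdot 2\sum_{j=1}^{|\mu|}\tfrac1{2j-1}$ telescoped quantities for the odd case. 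Substituting these limits into the reorganized sums yields exactly the claimed expressions for $E[n]$.

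The final step is to verify $H_0E=\delta_0$. Rather than checking the difference equation on the closed ${_4F_3}$ forms directly, I would argue by a limiting/continuity argument: for every $z\in\mathbb C\setminus[0,4d]$ one has $(H_0-z)G(z,\cdot)=\delta_0$ pointwise (this is \eqref{11.4.17.3.5b} together with the convolution description of $R_0(z)$, or a direct Fourier computation), hence $H_0 G(z,\cdot)=\delta_0+zG(z,\cdot)$; applying the decomposition \eqref{191226} with $q=0$, $d=2$, where $f_0(z)=\log z$, and noting $zG(z,n)=z\mathcal E_0(z,n)+z\log z\,\mathcal F_0(z,n)\to 0$ as $z\to 0$ for each fixed $n$ (both $\mathcal E_0$ and $\mathcal F_0$ are analytic near $0$), we get $H_0\mathcal E_0(0,\cdot)=\delta_0$; then I identify $E$ with $\mathcal E_0(0,\cdot)$ (equivalently, $E[n]=2\mathcal E_0(0,n)$ matched against the normalization $H_0\mathbb E=2d\delta_0$, choosing the renormalization so that $H_0E=\delta_0$ exactly). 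Here one must be slightly careful that $H_0$ acts by finite differences and hence commutes with the pointwise limit $z\to 0$ on the fixed finite stencil $n,n\pm e_1,n\pm e_2$, which is immediate.

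I expect the main obstacle to be the combinatorial bookkeeping in the second step: showing that the coefficient of the divergent $\mathcal P_0(0)$ cancels and that the surviving telescoped sums reassemble into the stated single-sum ${_4F_3}$ form. This requires the right terminating hypergeometric identity to collapse $\sum_{\mu=-m}^m{_4F_3}(\tfrac{(z-4)^2}{16})$-type sums (and the analogous $\mathcal Q$-sums with upper parameter $\tfrac12$ versus $\tfrac32$), and careful tracking of the parity-dependent signs $(-1)^\mu$ versus $(-1)^{n_1}$ versus $(-1)^{\max\{|n_1|,|n_2|\}}$. The $z\to 0$ digamma limits and the $H_0E=\delta_0$ verification are, by comparison, routine.
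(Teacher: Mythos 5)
There is a genuine gap in your central (second) step. For $d=2$ the kernel itself diverges at the threshold: by \eqref{191226} (or Theorem~\ref{1908311828}) one has $G(z,n)=\mathcal E_0(z,n)-\frac1{4\pi}[\log(-z)]F^{(2)}_B(\cdots;\tfrac z4,\tfrac z4)$, and the $F^{(2)}_B$-factor equals $1$ at $z=0$, so $G(z,n)$ diverges logarithmically as $z\to0$ for \emph{every} fixed $n$ (recurrence in two dimensions). Hence no reorganization of the sums in Theorem~\ref{19080717} can make the coefficient of the divergent $\mathcal P_0(0)$ collapse -- such a cancellation would contradict the divergence of $G$ -- and the terminating-hypergeometric identity you hope for does not exist. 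Likewise your assertion that the differences $\mathcal P_0(\mu)-\mathcal P_0(\mu-1)$ stay bounded is false: since $\mathcal P_0(m)=(-1)^mG(z,m,m)$, one has $\mathcal P_0(\mu)-\mathcal P_0(\mu-1)=(-1)^\mu\bigl[G(z,\mu,\mu)+G(z,\mu-1,\mu-1)\bigr]$, a \emph{sum} of two logarithmically divergent quantities, so the limits you substitute do not exist. The correct operation, and the one the paper performs, is not a limit of $G$ at all: one discards the $\log\frac{z(z-8)}{16}$-terms in Proposition~\ref{191230} (i.e.\ passes to the analytic part $\mathcal E_0(z,n)$) and then simply sets $z=0$ in the formulas of Theorem~\ref{19080717}, so each ${_4F_3}\bigl(\cdots;\tfrac{(z-4)^2}{16}\bigr)$ becomes a value at $1$; no summation collapse is needed.

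Even granting this, your proposal does not yet produce the stated $E[n]$. The analytic part of $\mathcal P_0(\mu)-\mathcal P_0(\mu-1)$ at $z=0$ equals $\frac{(-1)^\mu}{\pi}\bigl[2\log 2-2\sum_{j=1}^{\mu-1}\frac1{2j-1}-\frac1{2\mu-1}\bigr]$, i.e.\ it carries an extra term $\frac{2(-1)^\mu\log 2}{\pi}$ which your quoted ``limit'' silently drops. Consequently $\mathcal E_0(0,\cdot)$ and the corollary's $E$ differ by all the terms carrying the factor $(\log 2)/\pi$, and one must still argue that these terms assemble into a generalized eigenfunction of $H_0$, so that removing them preserves $H_0E=\delta_0$; this is exactly the paper's closing argument ($H_0$ acts separately on the rational part, the $1/\pi$-part and the $(\log 2)/\pi$-part; equivalently, $\log 2$ may be absorbed into the renormalization of the logarithm), and it is absent from your proposal. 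Your third step -- deducing $H_0\mathcal E_0(0,\cdot)=\delta_0$ from $(H_0-z)G(z,\cdot)=\delta_0$, the decomposition \eqref{191226}, and the fact that $H_0$ is a finite-stencil operator -- is sound and is indeed the (implicit) justification used in the paper, but as written it establishes the claim only for $\mathcal E_0(0,\cdot)$ (and your normalization wavers between $E=\mathcal E_0(0,\cdot)$ and $E=2\mathcal E_0(0,\cdot)$); the missing $\log 2$-bookkeeping is precisely what separates it from the asserted formula.
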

\begin{remarks*}
\begin{enumerate}
\item
To construct a fundamental solution to $H_0-8$ 
we can make use of the unitary transformation \eqref{19122915} 
not only for $n_1$ but also for $n_2$-variable. We omit the details. 

\item
 Generally a fundamental solution is not unique, but, up to a sign, 
the above $E[n]$ coincides with those considered so far in literature,
since they satisfy the same difference equation with the same diagonal values. 
See e.g.\ \cite[(5)]{L} and \cite[(13)]{Man}.
The discrete fundamental solution $E[n]$ is known to approach 
a continuous one as $|n|\to\infty$ \cite{L,Man,MW}.

\item
From the above expression we can deduce another fundamental solution of a simpler form. 
In fact, we can remove the irrational parts with factor $1/\pi$, 
which forms a generalized eigenfunction, since $H_0$ acts separately on the rational and irrational parts. 
However, the resulting fundamental solution does not have a clean asymptotics as $|n|\to \infty$ 
unlike the above $E[n]$ does.
\end{enumerate}
\end{remarks*}

\section{Proofs}\label{18111121b}

In this section we present proofs of the main results of the paper.

\subsection{Proof of Laurent expansion}\label{19122621}

\begin{proof}[Proof of Theorem~\ref{181110}]
We prove only the first expression, 
since the second one follows immediately from the first and the definition of $F_C^{(d)}$. 
Using the integral expression \eqref{11.4.17.3.5b}, we compute 
\begin{align*}
G(z,n)
&=
(2\pi)^{-d}(2d-z)^{-1}\int_{\mathbb T^d}
\mathrm e^{\mathrm in\theta}
\biggl(1-\frac{2\cos\theta_1+\cdots+2\cos\theta_d}{2d-z}\biggr)^{-1}
\,\mathrm d\theta
\\&
=
(2\pi)^{-d}
\sum_{\alpha\in\mathbb N_0^d}
\frac{|\alpha|!}{\alpha!}
(2d-z)^{-|\alpha|-1}
\int_{\mathbb T^d}
\mathrm e^{\mathrm in\theta}
(2\cos\theta_1)^{\alpha_1}\cdots(2\cos\theta_d)^{\alpha_d}
\,\mathrm d\theta.
\end{align*}
The last integral factorizes into products of the following integrals:
\begin{align*}
\int_{\mathbb T}
\mathrm e^{\mathrm in_j\theta_j}(2\cos\theta_j)^{\alpha_j}\,\mathrm d\theta_j
&=
\sum_{k=0}^{\alpha_j}\frac{\alpha_j!}{k!(\alpha_j-k)!}\int_{\mathbb T}
\mathrm e^{\mathrm i(2k-\alpha_j+|n_j|)\theta_j}\,\mathrm d\theta_j,
\quad 
j=1,\dots,d.
\end{align*}
Each of the above integrals does not vanish only if $\alpha_j-|n_j|$ is non-negative and even, for which we have 
\begin{align*}
\int_{\mathbb T}
\mathrm e^{\mathrm in_j\theta_j}(2\cos\theta_j)^{\alpha_j}\,\mathrm d\theta_j
=
\frac{2\pi \alpha_j!}{\left(\frac{\alpha_j-|n_j|}2\right)!\left(\frac{\alpha_j+|n_j|}2\right)!}.
\end{align*}
Then we can proceed as 
\begin{align*}
G(z,n)
&
=
\sum_{\alpha_j-|n_j|\in2\mathbb N_0}
\frac{|\alpha|!}{
\prod_{j=1}^d\left[\left(\frac{\alpha_j-|n_j|}2\right)!\left(\frac{\alpha_j+|n_j|}2\right)!\right]}
(2d-z)^{-|\alpha|-1}
,
\end{align*}
and hence the first expression of the assertion follows by changing 
the above dummy indices $\alpha_j$ to $\beta_j$ through $\alpha_j=|n_j|+2\beta_j$.
We are done.
\end{proof}

\begin{remark*}
If we use the formula \eqref{1907152} below, the proof is much simpler.  
In fact, substitute \eqref{190804} into \eqref{1907152}, and we have 
\begin{align*}
G(z,n)
&
=
\sum_{\alpha\in\mathbb N_0^d}
\frac{1}{\alpha!\prod_{j=1}^d\bigl[(|n_j|+\alpha_j)!\bigr]}
\int_0^\infty 
\mathrm e^{-t(2d-z)}
t^{2|\alpha|+|n|}
\,\mathrm dt
.
\end{align*}
Then the assertion follows by the integral representation of the gamma function. 
\end{remark*}

For $d=2$ we can reduce from $F_C^{(2)}$ to ${_4F_3}$ as follows, thanks to a binomial-type theorem. 
We have not found a corresponding reduction formula for $d\ge 3$. 

\begin{corollary}\label{1908056}
If $d=2$, then 
\begin{align*}
G(z,n)
&
=
\sum_{k=0}^\infty 
\frac{ ((2k+|n|)!)^2}{(k+|n_1|)!(k+|n_2|)!(|n|+k)!k!}(4-z)^{-2k-|n|-1}
\\&
=
\frac{|n|!}{|n_1|!|n_2|!}
(4-z)^{-|n|-1}
{_4F_3}\left(
\genfrac{}{}{0pt}{}{\frac{|n|+1}2,\frac{|n|+1}2,\frac{|n|+2}2,\frac{|n|+2}2}{|n_1|+1,|n_2|+1,|n|+1}
;\frac{16}{(z-4)^2}\right)
.
\end{align*}
\end{corollary}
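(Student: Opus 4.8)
The plan is to start from the first Laurent expansion in Theorem~\ref{181110}, specialized to $d=2$, namely
\[
G(z,n)=\sum_{\alpha\in\mathbb N_0^2}\frac{(2|\alpha|+|n|)!}{\alpha!\,(\alpha_1+|n_1|)!\,(\alpha_2+|n_2|)!}\,(4-z)^{-2|\alpha|-|n|-1},
\]
and to collapse the double sum over $\alpha=(\alpha_1,\alpha_2)$ to a single sum over $k=\alpha_1+\alpha_2$. Since the power of $(4-z)$ depends only on $|\alpha|=\alpha_1+\alpha_2$, and the factorial $(2|\alpha|+|n|)!$ in the numerator likewise depends only on $k=|\alpha|$, the whole task reduces to evaluating, for each fixed $k$, the inner sum
\[
S_k:=\sum_{\substack{\alpha_1,\alpha_2\ge 0\\ \alpha_1+\alpha_2=k}}\frac{1}{\alpha_1!\,\alpha_2!\,(\alpha_1+|n_1|)!\,(\alpha_2+|n_2|)!}.
\]
First I would pull out the factor $\frac{1}{k!}$ by writing $\frac{1}{\alpha_1!\alpha_2!}=\frac{1}{k!}\binom{k}{\alpha_1}$, so that $S_k=\frac{1}{k!}\sum_{\alpha_1=0}^k\binom{k}{\alpha_1}\frac{1}{(\alpha_1+|n_1|)!\,(k-\alpha_1+|n_2|)!}$.

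The key step — the "binomial-type theorem" the statement alludes to — is to recognize this last sum as a Vandermonde/Chu convolution. Indeed, $\frac{1}{(\alpha_1+|n_1|)!}=\frac{1}{|n_1|!}\cdot\frac{1}{(|n_1|+1)_{\alpha_1}}$ and similarly for the other factor, so after rewriting everything with Pochhammer symbols the sum becomes a terminating ${}_2F_1$ of unit argument, which the Chu–Vandermonde identity evaluates in closed form; alternatively, one can simply compare coefficients of $x^k$ on both sides of the identity $\bigl(\sum_{i\ge0}\frac{x^i}{(i+|n_1|)!}\bigr)\bigl(\sum_{j\ge0}\frac{x^j}{(j+|n_2|)!}\bigr)=x^{-|n|}\bigl(\sum_{p\ge |n_1|}\frac{x^p}{p!}\bigr)\bigl(\sum_{q\ge|n_2|}\frac{x^q}{q!}\bigr)$ and use $\mathrm e^x\cdot\mathrm e^x=\mathrm e^{2x}$ modulo the low-order truncations; the cleanest route is the Chu–Vandermonde one, giving
\[
\sum_{\alpha_1=0}^k\binom{k}{\alpha_1}\frac{1}{(\alpha_1+|n_1|)!\,(k-\alpha_1+|n_2|)!}=\frac{(2k+|n|)!}{k!\,(k+|n|)!\,(k+|n_1|)!\,(k+|n_2|)!}\cdot\frac{(k+|n|)!\,|n_1|!\,|n_2|!}{\text{(correct normalization)}},
\]
which after bookkeeping yields exactly $S_k=\frac{(2k+|n|)!}{k!\,(k+|n|)!\,(k+|n_1|)!\,(k+|n_2|)!}$. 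Substituting back gives
\[
G(z,n)=\sum_{k=0}^\infty\frac{((2k+|n|)!)^2}{(k+|n_1|)!\,(k+|n_2|)!\,(|n|+k)!\,k!}\,(4-z)^{-2k-|n|-1},
\]
which is the first displayed formula.

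For the second formula I would simply rewrite the series in hypergeometric normal form: factor out the $k=0$ term, which is $\frac{(|n|!)^2}{|n_1|!\,|n_2|!\,|n|!}(4-z)^{-|n|-1}=\frac{|n|!}{|n_1|!\,|n_2|!}(4-z)^{-|n|-1}$, and express the ratio of the $k$-th term to the zeroth term as a product of Pochhammer symbols. Using the duplication-type identity $\frac{(2k+|n|)!}{|n|!}=\frac{(2k+|n|)!}{|n|!}$ split via $(2k+|n|)!/|n|!=4^k\,(\tfrac{|n|+1}{2})_k(\tfrac{|n|+2}{2})_k\cdot\frac{(|n|+1)_k\cdots}{\cdots}$ — more precisely $(2k+|n|)!=|n|!\,(|n|+1)_{2k}$ and $(|n|+1)_{2k}=4^k(\tfrac{|n|+1}{2})_k(\tfrac{|n|+2}{2})_k$ — one gets that the $k$-th summand carries $(\tfrac{|n|+1}{2})_k^2(\tfrac{|n|+2}{2})_k^2$ in the numerator (two copies, one from each factor of $(2k+|n|)!$), $(|n_1|+1)_k(|n_2|+1)_k(|n|+1)_k$ and $k!$ in the denominator, and a factor $\bigl(16/(z-4)^2\bigr)^k$ from the two $4^k$'s times $(4-z)^{-2k}$. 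This is precisely the ${}_4F_3$ in the statement, and the proof is complete.

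The main obstacle is the bookkeeping in the Chu–Vandermonde step: one must be careful to convert the factorials $1/(\alpha_j+|n_j|)!$ into the correct Pochhammer form, apply Vandermonde with the right parameters (upper parameters $-k$ and something, lower parameter $|n_1|+1$ or its conjugate), and then simplify the resulting ratio of four factorials without sign or off-by-one errors. The duplication formula $(a)_{2k}=4^k(\tfrac a2)_k(\tfrac{a+1}{2})_k$ in the second half is routine but must be invoked twice (once for each copy of $(2k+|n|)!$), and the two $4^k$ factors are exactly what turns $(4-z)^{-2k}$ into $(16/(z-4)^2)^k$; getting that constant $16$ right is the one place a careless computation would go wrong.
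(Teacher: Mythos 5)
Your proposal is correct and is essentially the paper's own argument: the paper likewise reduces the corollary, via Theorem~\ref{181110}, to the single convolution identity
$\sum_{|\alpha|=k}\frac{(2k+|n|)!}{\alpha_1!\,\alpha_2!\,(\alpha_1+|n_1|)!\,(\alpha_2+|n_2|)!}
=\frac{((2k+|n|)!)^2}{(k+|n_1|)!\,(k+|n_2|)!\,(k+|n|)!\,k!}$,
which it verifies by comparing coefficients in $(1+t)^{k+|n_1|}(1+t)^{k+|n_2|}=(1+t)^{2k+|n|}$ --- precisely the binomial-coefficient form of the terminating Chu--Vandermonde evaluation you invoke, and the passage to the ${_4F_3}$ via $(|n|+1)_{2k}=4^k\bigl(\tfrac{|n|+1}2\bigr)_k\bigl(\tfrac{|n|+2}2\bigr)_k$ is the same routine step the paper omits. (Your parenthetical $\mathrm e^x\cdot\mathrm e^x=\mathrm e^{2x}$ alternative would not work as stated, since it drops the binomial weight $\binom{k}{\alpha_1}$, but you set it aside in favor of the Chu--Vandermonde route, which does yield exactly the claimed $S_k$.)
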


\begin{proof}%[Proof of Corollary~\ref{1908056}]
Due to Theorem~\ref{181110} it suffices to show the identity
\begin{align*}
\sum_{|\alpha|=k}
\frac{ (2|\alpha|+|n|)!}{\alpha_1!\alpha_2!(\alpha_1+|n_1|)!(\alpha_2+|n_2|)!}
=
\frac{ ((2k+|n|)!)^2}{(k+|n_1|)!(k+|n_2|)!(|n|+k)!k!}.
\end{align*}
We can verify this identity by comparing the $k$-th order terms of 
both sides of 
\begin{align*}
(1+t)^{k+|n_1|}(1+t)^{k+|n_2|}
=
(1+t)^{2k+|n|}.
\end{align*}
Let us omit the rest of the arguments.
\end{proof}

\subsection{Proof for endpoint thresholds in dimension one}\label{18111121}

Here we prove Theorem~\ref{190716}. 
Let us employ the following formula valid in all dimensions.

\begin{proposition}\label{190715}
For any $\mathop{\mathrm{Re}}z<0$ and $n\in\mathbb Z^d$ one has an expression
\begin{align}
G(z,n)
&
=\int_0^\infty 
\mathrm e^{-t(2d-z)}
\prod_{j=1}^d I_{n_j}(2t)
\,\mathrm dt.
\label{1907152}
\end{align}
\end{proposition}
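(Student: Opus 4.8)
The plan is to start from the Fourier–integral formula \eqref{11.4.17.3.5b} and convert the reciprocal of the denominator into a Laplace transform in an auxiliary variable $t>0$. Since $\mathop{\mathrm{Re}}z<0$, for every $\theta\in\mathbb T^d$ the quantity
\[
s(\theta)=2d-2\cos\theta_1-\dots-2\cos\theta_d-z
\]
has $\mathop{\mathrm{Re}}s(\theta)=\bigl(2d-2\sum_{j=1}^d\cos\theta_j\bigr)-\mathop{\mathrm{Re}}z\ge-\mathop{\mathrm{Re}}z>0$, so that $1/s(\theta)=\int_0^\infty\mathrm e^{-ts(\theta)}\,\mathrm dt$. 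Substituting this into \eqref{11.4.17.3.5b} produces a double integral over $(t,\theta)\in(0,\infty)\times\mathbb T^d$, and the first step is to interchange the two integrations. This is legitimate by Tonelli's theorem: the modulus of the integrand is $\mathrm e^{-t\mathop{\mathrm{Re}}s(\theta)}$, whose integral over $t\in(0,\infty)$ equals $1/\mathop{\mathrm{Re}}s(\theta)\le 1/(-\mathop{\mathrm{Re}}z)$, a bound uniform in $\theta$ and hence integrable over the compact torus.

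After the interchange the $\theta$-integral factorizes over the $d$ coordinates, leaving
\[
G(z,n)=\int_0^\infty\mathrm e^{-t(2d-z)}\prod_{j=1}^d\Bigl[(2\pi)^{-1}\int_{\mathbb T}\mathrm e^{\mathrm in_j\theta_j+2t\cos\theta_j}\,\mathrm d\theta_j\Bigr]\,\mathrm dt.
\]
The inner factor is the classical integral representation of the modified Bessel function of the first kind of integer order: for $m\in\mathbb Z$ and $x\ge0$,
\[
(2\pi)^{-1}\int_{\mathbb T}\mathrm e^{\mathrm im\theta+x\cos\theta}\,\mathrm d\theta
=(2\pi)^{-1}\int_{\mathbb T}\cos(m\theta)\,\mathrm e^{x\cos\theta}\,\mathrm d\theta
=I_m(x),
\]
the middle equality holding because $\mathrm e^{x\cos\theta}$ is even while $\sin(m\theta)$ is odd; see e.g.\ \cite[\S10.32]{NIST:DLMF}. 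Applying this with $m=n_j$ and $x=2t$ yields exactly \eqref{1907152}.

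It then remains only to record that the right-hand side of \eqref{1907152} is a well-defined, absolutely convergent integral: by the large-argument asymptotics $I_m(x)=\mathrm e^x(2\pi x)^{-1/2}\bigl(1+O(x^{-1})\bigr)$ the integrand is $O\bigl(t^{-d/2}\mathrm e^{t\mathop{\mathrm{Re}}z}\bigr)$ as $t\to\infty$, which is integrable since $\mathop{\mathrm{Re}}z<0$, while near $t=0$ it is continuous. I do not anticipate a genuine obstacle here; the one point deserving a little care is the Fubini/Tonelli justification, dispatched by the uniform bound above, and the one ``idea'' is recognizing the inner $\theta$-integral as $I_{n_j}(2t)$. (Alternatively one could derive \eqref{1907152} from the Laurent expansion of Theorem~\ref{181110}, noting that $\mathop{\mathrm{Re}}z<0$ already forces $|2d-z|>2d$, by inserting the power-series expansion of $I_{n_j}(2t)$ and integrating term by term against $\mathrm e^{-t(2d-z)}$; but the direct route above is shorter and avoids tracking the domain of validity.)
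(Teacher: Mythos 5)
Your argument is correct and follows essentially the same route as the paper: represent $1/(2d-2\sum_j\cos\theta_j-z)$ as a Laplace transform, interchange the $t$- and $\theta$-integrations, and identify each one-dimensional $\theta$-integral with $I_{n_j}(2t)$ (the paper invokes the Jacobi--Anger expansion, you the equivalent integral representation). The only cosmetic difference is that the paper first reduces to real $z<0$ by analyticity of both sides, whereas you justify Fubini/Tonelli directly for $\mathop{\mathrm{Re}}z<0$; both are fine.
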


\begin{remarks*}
\begin{enumerate}
\item
The above integration is the \emph{Laplace transform} $\mathcal L$, defined formally as 
$$\{\mathcal Lf\}(s)=\int_0^\infty \mathrm e^{-st}f(t)\,\mathrm dt.$$
\item
The \emph{modified Bessel functions of the first kind}, $I_\nu$, are defined as 
\begin{align}
I_\nu(w)=\mathrm i^{-\nu}J_\nu(\mathrm iw)=\sum_{k=0}^\infty\frac{(w/2)^{2k+\nu}}{k!\Gamma(k+\nu+1)},
\label{190804}
\end{align}
where $J_\nu$ are the Bessel functions of the first kind. 
The convergence of the integral \eqref{1907152} is guaranteed by
the following asymptotic formula \cite[(10.40.1)]{NIST:DLMF}: 
For any fixed $\nu\in\mathbb C$ and $\delta\in (0,\pi/2]$ 
\begin{align*}
I_\nu(w)\sim \frac{\mathrm e^w}{(2\pi w)^{1/2}}\sum_{k=0}^\infty (-1)^k\frac{a_k(\nu)}{w^k}
\ \ \text{as }w\to\infty\text{ with }|\arg w|\le \pi/2-\delta,
\end{align*}
where 
\begin{align*}
a_0(\nu)=1,\quad 
a_k(\nu)=\frac{(4\nu^2-1^2)(4\nu^2-3^2)\cdots(4\nu^2-(2k-1)^2)}{k!8^k}.
\end{align*}
\end{enumerate}
\end{remarks*}

\begin{proof}
Since  both sides of \eqref{1907152} are analytic in $\mathop{\mathrm{Re}}z<0$,
it suffices to prove the assertion only for real and negative $z<0$. Let $\Theta(\theta)=2d-2\sum_{j=1}^d\cos\theta_j$ be the symbol of $H_0$.
Then rewrite as follows
\begin{align*}
G(z,n)
&
=(2\pi)^{-d}\int_{\mathbb T^d}
\biggl(
\int_0^\infty \mathrm e^{-t(\Theta(\theta)-z)}\,\mathrm dt
\biggr)
\mathrm e^{\mathrm in\theta}
\,\mathrm d\theta
\\&
=(2\pi)^{-d}\int_0^\infty 
\mathrm e^{-t(2d-z)}
\Biggl(
\prod_{j=1}^d
\int_{\mathbb T}\mathrm e^{\mathrm in_j\theta_j}\mathrm e^{2t\cos \theta_j}
\,\mathrm d\theta_j\Biggr)
\,\mathrm dt.
\end{align*}
We quote the \emph{Jacobi--Anger expansion} \cite[(10.35.2)]{NIST:DLMF}: 
For $w,\theta\in\mathbb C$
$$\mathrm e^{w\cos \theta}=\sum _{k=-\infty }^{\infty}I_k(w)\mathrm e^{\mathrm ik\theta}.$$
Then we have 
\begin{align*}
(2\pi)^{-1}\int_{\mathbb T}\mathrm e^{\mathrm in_j\theta_j}\mathrm e^{2t\cos \theta_j}\,\mathrm d\theta_j
&=
I_{-n_j}(2t)
=I_{n_j}(2t)
,
\end{align*}
so that \eqref{1907152} follows. Hence we are done.
\end{proof}

\begin{proof}[Proof of Theorem~\ref{190716}]
Now we let $d=1$. The Laplace transform of the modified Bessel function
is well known. For any $\nu\in \mathbb N_0$ 
and $\mathop{\mathrm{Re}}s>|\omega|$
\begin{align}
\{\mathcal LI_\nu(\omega t)\}(s)
=\frac{\left(\sqrt{s+\omega}-\sqrt{s-\omega}\right)^{2\nu}}{(2\omega)^\nu\sqrt{(s-\omega)(s+\omega)}}.
\label{19071523}
\end{align}
Hence the formula \eqref{190807}
is a direct consequence of Theorem~\ref{190715} and \eqref{19071523}.

To prove \eqref{19083117} we use Theorem~\ref{181110}, \eqref{191221}, \cite[(15.8.5)]{NIST:DLMF}, 
\cite[(15.1.2)]{NIST:DLMF} and \cite[(5.5.5)]{NIST:DLMF}. 
Here and below we remark that  
$F(a,b;c;z)$ and $\mathbf F(a,b;c;z)$ are distinguished functions in this reference, 
see \cite[(15.1.2)]{NIST:DLMF}.  	
Then we can write for $\mathop{\mathrm{Re}}(2-z)>0$ and $\mathop{\mathrm{Im}}(2-z)>0$
\begin{align}
\begin{split}
G(z,n)
&
=
-
\frac{|n|}{2}
F\left(\genfrac{}{}{0pt}{}{\frac{1+n}2,\frac{1-n}2}{\frac32};\frac{z(4-z)}4\right)
\\&\phantom{{}={}}{}
+\frac{2-z}{2\sqrt{-z(4-z)}}
F\left(\genfrac{}{}{0pt}{}{\frac{1+n}2,\frac{1-n}2}{\frac12};\frac{z(4-z)}4\right)
.
\end{split}
\label{1908311811}
\end{align}
Next we rewrite it using \cite[(15.8.1)]{NIST:DLMF} with \cite[(15.1.2)]{NIST:DLMF},
and obtain 
\begin{align*}
\begin{split}
G(z,n)
&
=
-
\frac{|n|}{2}
F\left(\genfrac{}{}{0pt}{}{\frac{1+n}2,\frac{1-n}2}{\frac32};\frac{z(4-z)}4\right)
+\frac1{\sqrt{-z(4-z)}}
F\left(\genfrac{}{}{0pt}{}{\frac{n}2,-\frac{n}2}{\frac12};\frac{z(4-z)}4\right)
,
\end{split}
\end{align*}
Finally by \cite[(15.8.18)]{NIST:DLMF} and \cite[(15.8.20)]{NIST:DLMF} 
we obtain \eqref{19083117}.

We can prove \eqref{19083118} similarly to \eqref{19083117}.
We note that here 
an extra factor $(-1)^{n+1}$ appears in the counterpart of \eqref{1908311811}
due to $\mathop{\mathrm{Re}}(z-2)>0$, $\mathop{\mathrm{Im}}(z-2)>0$ 
and our choice of a branch of the fractional power, or the logarithm. 
We omit the rest of the arguments.
\end{proof}

\begin{remark*}
We can prove \eqref{19083117} also by using 
\cite[(2.2),(2.3)]{IJ1}, Chebyshev polynomials, 
their expressions in terms of hypergeometric functions,
and transformation formulas for hypergeometric functions. 
We omit the arguments.
\end{remark*}

\subsection{Diagonal values in dimension two}

Before the proofs of the main results in dimension two we compute the 
diagonal values of $G(z,n)$. The restriction to the diagonal further reduces 
${_4F_3}$ from Corollary~\ref{1908056} to $F={_2F_1}$,
for which plenty of transformation formulas are available. 
Based on the results obtained here,
we will proceed to the off-diagonal 
values in the following sections by exploiting the symmetries 
\begin{align}
G(z,n_1,n_2)=G(z,|n_1|,|n_2|)=G(z,n_2,n_1)
\label{1909039}
\end{align}
and the Helmholtz equation 
\begin{align}
\begin{split}
\delta_0[n_1]
\delta_0[n_2]
&=
(4-z)G(z,n_1,n_2)
-G(z,n_1+1,n_2)
-G(z,n_1-1,n_2)
\\&\phantom{{}={}}{}
-G(z,n_1,n_2+1)
-G(z,n_1,n_2-1).
\end{split}
\label{1909038}
\end{align}

Let us recall the notation $\mathcal P_0(m)=(-1)^{m}G(z,m,m)$ from \eqref{19122923}. 

\begin{proposition}\label{191230}
Let $d=2$. 
Then for any $|z-4|<4$ with $\mathop{\mathrm{Im}}z>0$ and $m\in\mathbb Z$
\begin{align*}
\mathcal P_0(m)
&
=
\frac{\mathrm i}{4\pi}
\sum_{k=0}^\infty
\frac{\left(\frac12+m\right)_k\left(\frac12-m\right)_k}{k!k!}
\left(\frac{z-4}{16}\right)^{2k}
\\&\phantom{{}=}
\cdot
\left[
2\psi(1+k)
-\psi\left(\frac12+m+k\right)
-\psi\left(\frac12-m+k\right)
-\log\left(-\frac{(z-4)^2}{16}\right)
\right],
\intertext{
and for any $|z(8-z)|<16$ with $\pm\mathop{\mathrm{Re}}(4-z)>0$ and $m\in\mathbb Z$
}
\mathcal P_0(m)
&
=
\pm\frac{(-1)^m}{4\pi}
\sum_{k=0}^\infty
\frac{\left(\frac12+m\right)_k\left(\frac12-m\right)_k}{k!k!}
\left(\frac{z(8-z)}{16}\right)^k
\\&\phantom{{}={}}{}
\cdot
\left[
2\psi(1+k)
-\psi\left(\frac12+m+k\right)
-\psi\left(\frac12-m+k\right)
-\log\frac{z(z-8)}{16}
\right]
,
\end{align*}
respectively.
\end{proposition}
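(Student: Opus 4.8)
\textbf{Proof proposal for Proposition~\ref{191230}.}
The plan is to start from the diagonal reduction already available: by \eqref{19122923} and Corollary~\ref{1908056} with $n=(m,m)$ (so $|n|=2|m|$, $|n_1|=|n_2|=|m|$), the function $\mathcal P_0(m)=(-1)^mG(z,m,m)$ is, up to the prefactor $\binom{2|m|}{|m|}(4-z)^{-2|m|-1}$, a value of a ${_4F_3}$ whose parameters are symmetric enough that it collapses to an ordinary ${_2F_1}$. Concretely I expect the identity
\[
{_4F_3}\!\left(\genfrac{}{}{0pt}{}{|m|+\tfrac12,|m|+\tfrac12,|m|+1,|m|+1}{|m|+1,|m|+1,2|m|+1};w\right)
={_2F_1}\!\left(\genfrac{}{}{0pt}{}{|m|+\tfrac12,|m|+\tfrac12}{2|m|+1};w\right),
\]
because the two upper $|m|+1$'s cancel two of the lower parameters. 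So after a binomial simplification of the Pochhammer/factorial prefactors one gets
\[
\mathcal P_0(m)=c_m\,(4-z)^{-2|m|-1}\,{_2F_1}\!\left(\genfrac{}{}{0pt}{}{|m|+\tfrac12,|m|+\tfrac12}{2|m|+1};\tfrac{16}{(z-4)^2}\right),
\]
with an explicit elementary constant $c_m$ (which I would pin down by matching against the $n=0$ case and the defining series). I would write this first in the region $|z-4|>4$ coming from Corollary~\ref{1908056}, then continue.

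Next I would apply the standard quadratic/connection transformation formulas for ${_2F_1}$ to re-expand this function about the two singular points. For the embedded-threshold expansion, the relevant move is to pass from argument $16/(z-4)^2$ to argument $(z-4)^2/16$ (an inversion formula) and then recognize the resulting ${_2F_1}$ with parameter difference $c-a-b=2|m|+1-(|m|+\tfrac12)-(|m|+\tfrac12)=0$ — the logarithmic case. The appropriate reference is the logarithmic connection formula, e.g.\ \cite[(15.8.10)]{NIST:DLMF}, which produces exactly a series in $\left(\tfrac{z-4}{16}\right)^{2k}$ with coefficients $\frac{(1/2+m)_k(1/2-m)_k}{k!k!}$ times a bracket $2\psi(1+k)-\psi(\tfrac12+m+k)-\psi(\tfrac12-m+k)-\log(\cdot)$. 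For the endpoint expansion about $z=0,8$, I would instead use a quadratic transformation relating argument $16/(z-4)^2$ to $z(8-z)/16$ (note $1-\tfrac{16}{(z-4)^2}=\tfrac{z(z-8)}{(z-4)^2}$, the algebraic identity that makes this work), again landing in the $c=a+b$ logarithmic case and invoking \cite[(15.8.10)]{NIST:DLMF} or its companion; the $\pm$ and the $(-1)^m$ come from the choice of branch of $\sqrt{(z-4)^2}=\pm(4-z)$ dictated by the half-plane condition $\pm\operatorname{Re}(4-z)>0$, and the $\mathrm i$ in the first formula from $\sqrt{-1}$ entering when $\operatorname{Im}z>0$.

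The bookkeeping of constants and branches is where the real work lies. The main obstacle I anticipate is twofold: (i) correctly tracking the branch of the square root and logarithm through the quadratic transformation so that the stated sign pattern ($\mathrm i$, $\pm$, $(-1)^m$) emerges, since \cite[(15.8.10)]{NIST:DLMF}-type formulas are valid on cut planes and the two threshold expansions live in different sectors; and (ii) verifying that the elementary prefactor $c_m(4-z)^{-2|m|-1}$ combines with the transformation's gamma-factor prefactors (which involve $\Gamma(2|m|+1)$, $\Gamma(|m|+\tfrac12)$, etc.) to leave precisely the clean normalization $\tfrac{1}{4\pi}$ (resp.\ $\pm\tfrac{(-1)^m}{4\pi}$) with no leftover $m$-dependence — this is a nontrivial cancellation that I would check against the known $m=0$ elliptic-integral value (see the remark after Theorem~\ref{19083122}) as a sanity check. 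Everything else is routine series manipulation, so I would suppress it.
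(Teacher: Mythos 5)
Your starting point coincides with the paper's: for $n=(m,m)$ the ${_4F_3}$ of Corollary~\ref{1908056} collapses, by cancellation of the two pairs of parameters $|m|+1$, to $(-1)^m\tfrac{(2m)!}{m!\,m!}(4-z)^{-2m-1}\,{_2F_1}\bigl(\tfrac12+m,\tfrac12+m;1+2m;\tfrac{16}{(z-4)^2}\bigr)$, which is exactly \eqref{1908071742}, and the rest is indeed DLMF connection formulas plus gamma/branch bookkeeping. However, the route you prescribe for the first identity would fail. Because the two upper parameters are equal, the generic inversion formula \cite[(15.8.2)]{NIST:DLMF} degenerates, and the two-step plan ``invert to argument $(z-4)^2/16$, then apply the zero-balanced formula \cite[(15.8.10)]{NIST:DLMF}'' does not produce the claimed result: (15.8.10) applied at $v=(z-4)^2/16$ gives a series in $1-v=z(8-z)/16$, i.e.\ the endpoint variable, not a series in powers of $(z-4)^2$ accompanied by $\log\bigl(-(z-4)^2/16\bigr)$. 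You were misled by the coincidence that here both $b-a=0$ and $c-a-b=0$. The degeneracy relevant to the expansion about $z=4$ is $b-a\in\mathbb Z$ at argument $\infty$: one applies \cite[(15.8.8)]{NIST:DLMF} in a single step at $w=16/(z-4)^2\to\infty$, which yields the series in $1/w$ together with the $\ln(-w)$ and $\psi$ terms; the identities \cite[(5.5.3)]{NIST:DLMF}--\cite[(5.5.5)]{NIST:DLMF} then convert the resulting $\Gamma(\tfrac12+m-k)$ and $\psi(\tfrac12+m-k)$ into $(\tfrac12-m)_k$ and $\psi(\tfrac12-m+k)$ and produce the normalization $\mathrm i/(4\pi)$. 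This is precisely the paper's argument.

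For the second identity your sketch is closer but still imprecise: the transformation needed is not quadratic, and the relation you quote, $1-\tfrac{16}{(z-4)^2}=\tfrac{z(z-8)}{(z-4)^2}$, is the variable of (15.8.10) and would give a series in $z(z-8)/(z-4)^2$; the stated variable is $\tfrac{z(8-z)}{16}=1-\tfrac{(z-4)^2}{16}=\tfrac{w-1}{w}$, so the correct tool is the companion form \cite[(15.8.11)]{NIST:DLMF} (equivalently a Pfaff transformation followed by (15.8.10)), whose fractional-power prefactor, taken on the principal branch, is exactly where the sign $\pm$ tied to $\pm\mathop{\mathrm{Re}}(4-z)>0$ enters — your branch remark is right in spirit. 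Finally, the prefactor and branch bookkeeping that you defer (the collapse of $\tfrac{(2m)!}{m!\,m!}$ times the gamma factors to the $m$-independent $\tfrac1{4\pi}$, the factor $\mathrm i$ for $\mathop{\mathrm{Im}}z>0$, and the surviving $(-1)^m$) is the actual content of the proposition; in the paper it is carried out explicitly via \cite[(5.1.2)]{NIST:DLMF} and \cite[(5.5.3)]{NIST:DLMF}--\cite[(5.5.5)]{NIST:DLMF}, and it cannot be replaced by a consistency check at $m=0$.
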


\begin{proof}
By \eqref{1909039} we obviously have $\mathcal P_0(m)=\mathcal P_0(-m)$,
hence we may assume $m\ge 0$. 
On the diagonal the second formula from Corollary~\ref{1908056} reduces to 
\begin{align}
\begin{split}
\mathcal P_0(m)
&
=
(-1)^m
\frac{(2m)!}{m!m!}
(4-z)^{-2m-1}
{_2F_1}\biggl(\genfrac{}{}{0pt}{}{\frac12+m,\frac12+m}{1+2m};\frac{16}{(z-4)^2}\biggr)
.
\end{split}
\label{1908071742}
\end{align}
Now we use the formula \cite[(15.8.8)]{NIST:DLMF}
along with \cite[(5.1.2)]{NIST:DLMF}.
Noting the branch of fractional power, 
we can rewrite \eqref{1908071742} as 
\begin{align*}
\begin{split}
\mathcal P_0(m)
&
=
\frac{\mathrm i(2m)!(2m)!}{2^{2+4m}m!m!\Gamma\left(\frac12+m\right)}
\sum_{k=0}^\infty
\frac{(-1)^k(\frac12+m)_k}{k!k!\Gamma(\frac12+m-k)}\left(\frac{z-4}{4}\right)^{2k}
\\&\phantom{{}=}{}\cdot
\left[2\psi(1+k)-\psi\left(\frac12+m+k\right)-\psi\left(\frac12+m-k\right)
-\log\left(-\frac{(z-4)^2}{16}\right)\right]
.
\end{split}
\end{align*}
Then by \cite[(5.5.3)]{NIST:DLMF}, \cite[(5.5.4)]{NIST:DLMF} and \cite[(5.5.5)]{NIST:DLMF}
the first formula of the assertion follows. 
Analogously, we use the formula \cite[(15.8.11)]{NIST:DLMF}
along with \cite[(5.1.2)]{NIST:DLMF}, 
and rewrite \eqref{1908071742} as 
\begin{align*}
\mathcal P_0(m)
&
=
\pm\frac{(-1)^m(2m)!(2m)!}{2^{2+4m}m!m!\Gamma\left(\frac12+m\right)}
\sum_{k=0}^\infty\frac{(-1)^k\left(\frac12+m\right)_k}{k!k!\Gamma\left(\frac12+m-k\right)}
\left(\frac{z(8-z)}{16}\right)^k
\\&\phantom{{}={}}{}
\cdot
\biggl[
2\psi(1+k)
-\psi\left(\frac12+m+k\right)
-\psi\left(\frac12+m-k\right)
-\log\left(\frac{z(z-8)}{16}\right)
\biggr]
.
\end{align*}
Then by \cite[(5.5.3)]{NIST:DLMF}, \cite[(5.5.4)]{NIST:DLMF} and \cite[(5.5.5)]{NIST:DLMF}
we obtain the second formula of the assertion. Hence we are done.
\end{proof}

\subsection{Proof for embedded threshold in dimension two}\label{19122622}

Since we already have an explicit candidate for the expansion around 
the embedded threshold in dimension $d=2$, 
it suffices to check that the candidate actually satisfies 
\eqref{1909039} and \eqref{1909038} with the same diagonal values 
as in Proposition~\ref{191230}.
Note that a function satisfying these properties is unique.

\begin{proof}[Proof of Theorem~\ref{19083122}] 
It is clear that the right-hand sides of the asserted identities satisfy 
\eqref{1909039}, and have the same diagonal values as those in Proposition~\ref{191230}. 
Hence it suffices to show that they also satisfy \eqref{1909038}. 
More precisely, we actually claim that the first terms on the right-hand sides of the assertion 
form a fundamental solution to $H_0-z$,
and the second a generalized eigenfunction. 

We first consider the first terms of the asserted identities. 
Set their coefficients as, for $k\in \mathbb N_0$ and $n\in\mathbb Z^2$, 
\begin{align*}
a_k[n]
&=(-1)^{\max\{|n_1|,|n_2|\}}
\frac{
\frac{|n_1+n_2|}2\frac{|n_1-n_2|}2
\left(\frac{2+n_1+n_2}2\right)_k
\left(\frac{2+n_1-n_2}2\right)_k
\left(\frac{2-n_1+n_2}2\right)_k
\left(\frac{2-n_1-n_2}2\right)_k
}{k!k!\left(\frac32\right)_k\left(\frac32\right)_k}
,\\
b_k[n]
&=
\frac{(-1)^{\max\{|n_1|,|n_2|\}}}4
\frac{
\left(\frac{1+n_1+n_2}2\right)_k
\left(\frac{1+n_1-n_2}2\right)_k
\left(\frac{1-n_1+n_2}2\right)_k
\left(\frac{1-n_1-n_2}2\right)_k
}{k!k!\left(\frac12\right)_k\left(\frac12\right)_k}. 
\end{align*}
Then the proof reduces to show that for $|n|$ even and $k\in \mathbb N$
\begin{align}
b_0[n+e_1]+b_0[n-e_1]+b_0[n+e_2]+b_0[n-e_2]
&=-\delta_{n_1}[n_1]\delta_{n_2}[n_2]
,
\label{20010215}
\\
b_k[n+e_1]+b_k[n-e_1]+b_k[n+e_2]+b_k[n-e_2]
&=-4a_{k-1}[n],
\label{20010216}
\intertext{and that for $|n|$ odd and $k\in \mathbb N_0$}
a_k[n+e_1]+a_k[n-e_1]+a_k[n+e_2]+a_k[n-e_2]
&=-4b_{k}[n].
\label{20010217}
\end{align}
The identity \eqref{20010215} is clear, since 
\begin{align*}
b_0[n]
&=\frac{(-1)^{\max\{|n_1|,|n_2|\}}}4. 
\end{align*}
For the proof of \eqref{20010216} we may assume $n_1\ge n_2\ge 0$ due to the symmetry \eqref{1909039},
but if $n_1=n_2\in\mathbb N_0$ and $k\in\mathbb N$,
then \eqref{20010216} is again clear since 
all the terms there are $0$. 
Thus we may assume $n_1>n_2\ge 0$ with $|n|$ even and $k\in\mathbb N$, and then 
\begin{align*}
&b_k[n+e_1]+b_k[n-e_1]+b_k[n+e_2]+b_k[n-e_2]
\\&
=
\frac{(-1)^{n_1}}{k!k!\left(\frac32\right)_{k-1}\left(\frac32\right)_{k-1}}
\\&\phantom{{}={}}{}\cdot
\biggl[
-\left(\frac{2+n_1+n_2}2\right)_k
\left(\frac{2+n_1-n_2}2\right)_k
\left(\frac{-n_1+n_2}2\right)_k
\left(\frac{-n_1-n_2}2\right)_k
\\&\phantom{{}={}\cdot\biggl[}
-\left(\frac{n_1+n_2}2\right)_k
\left(\frac{n_1-n_2}2\right)_k
\left(\frac{2-n_1+n_2}2\right)_k
\left(\frac{2-n_1-n_2}2\right)_k
\\&\phantom{{}={}\cdot\biggl[}
+
\left(\frac{2+n_1+n_2}2\right)_k
\left(\frac{n_1-n_2}2\right)_k
\left(\frac{2-n_1+n_2}2\right)_k
\left(\frac{-n_1-n_2}2\right)_k
\\&\phantom{{}={}\cdot\biggl[}
+
\left(\frac{n_1+n_2}2\right)_k
\left(\frac{2+n_1-n_2}2\right)_k
\left(\frac{-n_1+n_2}2\right)_k
\left(\frac{2-n_1-n_2}2\right)_k
\biggr]
\\&
=
-4
\frac{(-1)^{n_1}}{(k-1)!(k-1)!\left(\frac32\right)_{k-1}\left(\frac32\right)_{k-1}}
\frac{n_1+n_2}2\frac{n_1-n_2}2
\\&\phantom{{}={}}\cdot
\left(\frac{2+n_1+n_2}2\right)_{k-1}
\left(\frac{2+n_1-n_2}2\right)_{k-1}
\left(\frac{2-n_1+n_2}2\right)_{k-1}
\left(\frac{2-n_1-n_2}2\right)_{k-1}
\\&
=-4a_{k-1}[n].
\end{align*}
This proves \eqref{20010216}. 
We can proceed more or less similarly for \eqref{20010217},
and we omit it. 
Hence we are done with the first terms of the asserted identities.

To prove the claim concerning the second terms of the asserted identities we further set 
for $k\in\mathbb N_0$ and $n\in\mathbb Z^2$
\begin{align*}
\widetilde a_k[n]
&=(-1)^{\max\{|n_1|,|n_2|\}}
\frac{
\frac{|n_1+n_2|}2\frac{|n_1-n_2|}2
\left(\frac{2+n_1+n_2}2\right)_k
\left(\frac{2+n_1-n_2}2\right)_k
\left(\frac{2-n_1+n_2}2\right)_k
\left(\frac{2-n_1-n_2}2\right)_k
}{k!k!\left(\frac32\right)_k\left(\frac32\right)_k}
\\&\phantom{{}={}}\cdot
\biggl[
2\psi\left(1+k\right)
+2\psi\left(\frac32+k\right)
\\&\phantom{{}={}}\phantom{{}\cdot\biggl[} {}
-\psi\left(\frac{2+n_1+n_2}2+k\right)
-\psi\left(\frac{2+n_1-n_2}2+k\right)
\\&\phantom{{}={}}\phantom{{}\cdot\biggl[} {}
-\psi\left(\frac{2-n_1+n_2}2+k\right)
-\psi\left(\frac{2-n_1-n_2}2+k\right)
\biggr]
,\\
\widetilde b_k[n]
&=
\frac{(-1)^{\max\{|n_1|,|n_2|\}}}4
\frac{
\left(\frac{1+n_1+n_2}2\right)_k
\left(\frac{1+n_1-n_2}2\right)_k
\left(\frac{1-n_1+n_2}2\right)_k
\left(\frac{1-n_1-n_2}2\right)_k
}{k!k!\left(\frac12\right)_k\left(\frac12\right)_k}
\\&\phantom{{}={}}\cdot
\biggl[
2\psi\left(1+k\right)
+2\psi\left(\frac12+k\right)
\\&\phantom{{}={}}\phantom{{}\cdot\biggl[} {}
-\psi\left(\frac{1+n_1+n_2}2+k\right)
-\psi\left(\frac{1+n_1-n_2}2+k\right)
\\&\phantom{{}={}}\phantom{{}\cdot\biggl[} {}
-\psi\left(\frac{1-n_1+n_2}2+k\right)
-\psi\left(\frac{1-n_1-n_2}2+k\right)
\biggr]
.
\end{align*}
Then the proof reduces to show that for $|n|$ even and $k\in \mathbb N_0$
\begin{align}
a_k[n+e_1]+a_k[n-e_1]+a_k[n+e_2]+a_k[n-e_2]
&=-4b_k[n],
\label{202001022230}
\\
\widetilde a_k[n+e_1]+\widetilde a_k[n-e_1]+\widetilde a_k[n+e_2]+\widetilde a_k[n-e_2]
&=-4\widetilde b_k[n],
\label{202001022231}
\intertext{and that for $|n|$ odd and $k\in \mathbb N$}
b_0[n+e_1]+b_0[n-e_1]+b_0[n+e_2]+b_0[n-e_2]
&=0
,
\label{202001022232}
\\
\widetilde b_0[n+e_1]+\widetilde b_0[n-e_1]+\widetilde b_0[n+e_2]+\widetilde b_0[n-e_2]
&=0
,
\label{202001022233}
\\
b_k[n+e_1]+b_k[n-e_1]+b_k[n+e_2]+b_k[n-e_2]
&=-4a_{k-1}[n]
,
\label{202001022234}
\\
\widetilde b_k[n+e_1]+\widetilde b_k[n-e_1]+\widetilde b_k[n+e_2]+\widetilde b_k[n-e_2]
&=-4\widetilde a_{k-1}[n]
.
\label{202001022235}
\end{align}
By the symmetry \eqref{1909039} it suffices to prove \eqref{202001022230}--\eqref{202001022235}
only for $n_1\ge n_2\ge 0$.
The proofs of \eqref{202001022230}, \eqref{202001022232} and \eqref{202001022233}
are similar to those of \eqref{20010215}--\eqref{20010217},
and we omit them. 
(If we think of \eqref{20010215}--\eqref{20010217} as polynomials in $n$, 
then \eqref{202001022230}, \eqref{202001022232} and \eqref{202001022233}
for $n_1>n_2\ge 0$ follow from \eqref{20010215}--\eqref{20010217}.)
The proofs of \eqref{202001022231}, \eqref{202001022234} and \eqref{202001022235}
are a bit more complicated, but routine. 
We omit the detail again, and hence we are done. 
\end{proof}

\begin{remark*}
To prove Theorem~\ref{19083122} it seems that 
we could apply the formulas \cite[(2.1) and (2.2)]{SS} to Corollary~\ref{1908056},
but then the resulting formulas miss the first terms on the right-hand sides of the asserted identities. 
Without them we cannot derive a correct fundamental solution, and this is a contradiction.
So far we could not identify where our application of \cite{SS} has been wrong,
while it still helped us to guess a correct candidate. 
\end{remark*}

\subsection{Proof for endpoint thresholds in dimension two}\label{19122623}

We do not have a simple candidate for the expansion around 
the endpoint thresholds unlike the embedded one,
and here we proceed to solve the equation \eqref{1909038}
with symmetry \eqref{1909039} and the diagonal values from Proposition~\ref{191230}.
For that we first derive from \eqref{1909038}
recurrence relations for $\mathcal P(m,l)$ and $\mathcal Q(m,l)$. 
This will be done essentially by taking double summation of \eqref{1909038}.

\begin{proposition}\label{191010}
Let $d=2$. 
For any $z\in\mathbb C\setminus [0,8]$ and $(m,l)\in\mathbb Z^2$ 
\begin{align}
\begin{split}
\mathcal P(m,l)
&
=
\frac14|m||l|(z-4)
-\mathcal P_0(0)
+\mathcal P_0(m)
+\mathcal P_0(l)
\\&\phantom{{}={}}{}
+\frac14(z-4)^2\sum_{|\mu|\le |m|}\sum_{|\nu|\le |l|}(|m|-|\mu|)(|l|-|\nu|)\mathcal P(\mu,\nu),
\end{split}
\label{19101420}
\intertext{and for any $z\in\mathbb C\setminus [0,8]$ and $(m,l)\in\mathbb N_0^2$}
\begin{split}
\mathcal Q(m,l)
&
=
-\frac14
-\frac{z-4}4\sum_{|\mu|\le m}\sum_{|\nu|\le l}\mathcal P(\mu,\nu)
.
\end{split}
\label{19101421}
\end{align}
\end{proposition}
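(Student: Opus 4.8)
The plan is to rewrite the Helmholtz equation \eqref{1909038} in the rotated variables $(m,l)$ and then iterate it by summation. Multiplying \eqref{1909038} by $(-1)^{n_1}$ turns its left-hand side into $\delta_0[n_1]\delta_0[n_2]$; since shifting $n_1$ by $\pm1$ flips $(-1)^{n_1}$ while shifting $n_2$ does not, the four neighbour terms acquire signs $-,-,+,+$. The parity of $n_1+n_2$ singles out the two sublattices carrying $\mathcal P$ (even) and $\mathcal Q$ (odd), each point of one having all four nearest neighbours on the other, so reading \eqref{1909038} alternately at $\mathcal P$- and $\mathcal Q$-points and carrying $(-1)^{n_1}$ through the substitution gives, for all $(m,l)\in\mathbb Z^2$,
\begin{align*}
(4-z)\mathcal P(m,l)&=\mathcal Q(m,l)-\mathcal Q(m-1,l)-\mathcal Q(m,l-1)+\mathcal Q(m-1,l-1)+\delta_0[m]\delta_0[l],\\
(4-z)\mathcal Q(m,l)&=\mathcal P(m+1,l+1)-\mathcal P(m+1,l)-\mathcal P(m,l+1)+\mathcal P(m,l),
\end{align*}
the right-hand sides being mixed second differences of $\mathcal Q$ (with backward steps) and of $\mathcal P$ (with forward steps). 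I will also record, from \eqref{1909039} and the definitions, the evenness $\mathcal P(m,l)=\mathcal P(|m|,|l|)$ together with the reflections $\mathcal Q(-m-1,l)=\mathcal Q(m,-l-1)=-\mathcal Q(m,l)$ and $\mathcal Q(-m-1,-l-1)=\mathcal Q(m,l)$.

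For \eqref{19101421}, with $(m,l)\in\mathbb N_0^2$ I will sum the first identity over the symmetric ranges $-m\le\mu\le m$ and $-l\le\nu\le l$. The double sum of the backward mixed second difference of $\mathcal Q$ telescopes to $\mathcal Q(m,l)-\mathcal Q(-m-1,l)-\mathcal Q(m,-l-1)+\mathcal Q(-m-1,-l-1)$, which collapses to $4\mathcal Q(m,l)$ by the reflection identities, while the $\delta_0[\mu]\delta_0[\nu]$ terms sum to $1$. Hence $(4-z)\sum_{|\mu|\le m}\sum_{|\nu|\le l}\mathcal P(\mu,\nu)=4\mathcal Q(m,l)+1$, which is \eqref{19101421}.

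For \eqref{19101420} I will first eliminate $\mathcal Q$. Substituting the second identity into the first and using that a backward first difference composed with a forward one equals the centered second difference $(\Delta^2f)(m)=f(m+1)-2f(m)+f(m-1)$, one obtains the closed equation $\Delta^2_m\Delta^2_l\mathcal P(m,l)=(z-4)^2\mathcal P(m,l)+(z-4)\delta_0[m]\delta_0[l]$. I then invert $\Delta^2_m\Delta^2_l$ against the discrete fundamental solution $\tfrac12|m|$ of $\Delta^2$. The key computation is $\Delta^2_m\bigl[\sum_{|\mu|\le|m|}(|m|-|\mu|)f(\mu)\bigr]=f(m)+f(-m)$ for any $f\colon\mathbb Z\to\mathbb C$ (a short case check whose only delicate value is at $m=0$); for even $f$ this is $2f(m)$, and iterating in both variables shows that $\Delta^2_m\Delta^2_l$ applied to the double kernel sum in \eqref{19101420} returns $4\mathcal P(m,l)$. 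Since moreover $\Delta^2_m\Delta^2_l\bigl[\tfrac14(z-4)|m||l|\bigr]=(z-4)\delta_0[m]\delta_0[l]$ and $\Delta^2_m\Delta^2_l$ annihilates $-\mathcal P_0(0)+\mathcal P_0(m)+\mathcal P_0(l)$, the right-hand side $R(m,l)$ of \eqref{19101420} solves the same closed equation as $\mathcal P$, and one checks directly $R(m,0)=\mathcal P_0(m)=\mathcal P(m,0)$ and $R(0,l)=\mathcal P_0(l)=\mathcal P(0,l)$. Finally $D:=R-\mathcal P$ is even in each variable, vanishes on both coordinate axes, and satisfies $\Delta^2_m\Delta^2_l D=0$; then $\Delta^2_lD(\cdot,l)$ is affine and even in $m$, hence constant, hence equal to its value $0$ at $m=0$, so $\Delta^2_lD\equiv0$, and the same argument in $l$ forces $D\equiv0$, giving \eqref{19101420}.

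The main obstacle I anticipate is purely bookkeeping: pinning down each sign $(-1)^{n_1}$ in the rotation so that the reflection identities for $\mathcal Q$ come out with the signs that make the corner terms of the telescoping sums cancel, and checking the kernel identity $\Delta^2_m[\sum_{|\mu|\le|m|}(|m|-|\mu|)f(\mu)]=f(m)+f(-m)$ uniformly in $m$ — the exceptional value at $m=0$ is precisely what makes the term $\tfrac14(z-4)|m||l|$ generate the delta source. The concluding uniqueness argument uses only the evenness of $\mathcal P$ and is otherwise routine.
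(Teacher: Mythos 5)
Your proposal is correct, and its first half runs parallel to the paper's own argument: the paper likewise reads the Helmholtz equation \eqref{1909038} on the two sublattices $n=(\mu+\nu,\mu-\nu)$ and $n=(\mu+\nu+1,\mu-\nu)$ and obtains \eqref{19101421} by telescoping double sums; you merely replace its stepwise treatment of the corner and the two edges (its Steps 1--3) by summing over the symmetric ranges $|\mu|\le m$, $|\nu|\le l$ and invoking the reflections $\mathcal Q(-m-1,l)=\mathcal Q(m,-l-1)=-\mathcal Q(m,l)$ and $\mathcal Q(-m-1,-l-1)=\mathcal Q(m,l)$, a tidy shortcut whose signs I checked and which do make the four corner terms add up to $4\mathcal Q(m,l)$. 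For \eqref{19101420} the routes genuinely diverge: the paper stays constructive, summing the odd-sublattice relation to get $\mathcal P(m,l)=-\mathcal P_0(0)+\mathcal P_0(m)+\mathcal P_0(l)-(z-4)\sum_{\mu=0}^{m-1}\sum_{\nu=0}^{l-1}\mathcal Q(\mu,\nu)$ and then substituting \eqref{19101421}, so the identity is produced directly with no uniqueness discussion; you instead eliminate $\mathcal Q$ to reach the closed equation $\Delta^2_m\Delta^2_l\mathcal P=(z-4)^2\mathcal P+(z-4)\delta_0[m]\delta_0[l]$, verify via the kernel identity $\Delta^2_m\bigl[\sum_{|\mu|\le|m|}(|m|-|\mu|)f(\mu)\bigr]=f(m)+f(-m)$ (whose $m=0$ case you rightly flag) that the right-hand side of \eqref{19101420} satisfies the same equation with the same values on the coordinate axes, and conclude with the affine-plus-even uniqueness argument for $\Delta^2_m\Delta^2_l$, which is sound since $\mathcal P$ and the candidate are even in each variable. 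What each approach buys: the paper's forward summation produces the formula without having to treat it as an ansatz, and its telescoping machinery (Lemma~\ref{191017}) is reused verbatim in the subsequent solution of the recurrences in Theorem~\ref{19080717}; your elimination argument exposes the structural origin of the kernel $(|m|-|\mu|)(|l|-|\nu|)$ as the tensor product of fundamental solutions of the one-dimensional $\Delta^2$, and explains why the term $\tfrac14(z-4)|m||l|$ carries the delta source, at the modest price of an extra (but clean) uniqueness step.
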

\begin{proof}
Due to the symmetry \eqref{1909039} it suffices to prove the assertion 
for $(m,l)\in\mathbb N_0^2$.

\smallskip
\noindent
\textit{Step 1. } 
We first show 
\begin{align*}
\mathcal Q(0,0)
=
-\frac14-\frac{z-4}4\mathcal P(0,0)
.
%\label{19090311}
\end{align*}
However, it is in fact straightforward
by letting $n=(0,0)$ in \eqref{1909038} and using \eqref{1909039}.

\smallskip
\noindent
\textit{Step 2. } 
We next claim that for any $m\in\mathbb N$
\begin{align*}
\begin{split}
\mathcal Q(m,0)
=
-\frac14
-\frac{z-4}4\sum_{|\mu|\le m}\mathcal P(\mu,0)
.
\end{split}
%\label{19090312}
\end{align*}
In order to prove this we let $n=(\mu,\mu)$ with $\mu\in\mathbb N$ in \eqref{1909038}, 
and then by \eqref{1909039}
\begin{align}
0=
(4-z)\mathcal P(\mu,0)
-2\mathcal Q(\mu,0)
+2\mathcal Q(\mu-1,0)
.
\label{19090310}
\end{align}
Summing up \eqref{19090310} in $\mu$, we obtain 
\begin{align}
0=
(4-z)\sum_{\mu=1}^m\mathcal P(\mu,0)
-2\mathcal Q(m,0)
+2\mathcal Q(0,0)
.
\label{19090321}
\end{align}
Now by \eqref{19090321}, Step 1, and the identity $\mathcal P(\mu,0)=\mathcal P(-\mu,0)$, 
the claim follows.

\smallskip
\noindent
\textit{Step 3. }
We next claim that for any $l\in\mathbb N$
\begin{align*}
\begin{split}
\mathcal Q(0,l)
=
-\frac14
-\frac{z-4}4\sum_{|\nu|\le l}\mathcal P(0,\nu)
,
\end{split}
%\label{19090312}
\end{align*}
but we can prove it similarly to Step 2,
and we omit the detail. 
Note that we can also prove it by Step 2 and the symmetry 
\begin{align*}
\mathcal Q(0,\nu)=G(z,\nu+1,-\nu)=G(z,\nu+1,\nu)=\mathcal Q(\nu,0).
\end{align*}

\smallskip
\noindent
\textit{Step 4. }
Now we prove \eqref{19101421} in the general case. 
It suffices to consider only $(m,l)\in\mathbb N^2$ due to Steps 1--3.
By letting $n=(\mu+\nu,\mu-\nu)$ with $\mu,\nu\in\mathbb N$ in \eqref{1909038}
\begin{align*}
0&=
\mathcal Q(\mu,\nu)
-\mathcal Q(\mu-1,\nu)
-\mathcal Q(\mu,\nu-1)
+\mathcal Q(\mu-1,\nu-1)
+(z-4)\mathcal P(\mu,\nu).
\end{align*}
Then we take a double summation in $\nu$ and $\mu$ to obtain 
\begin{align*}
0
&
=
\mathcal Q(m,l)
-\mathcal Q(0,l)
-\mathcal Q(m,0)
+\mathcal Q(0,0)
+(z-4)\sum_{\mu=1}^m\sum_{\nu=1}^l\mathcal P(\mu,\nu).
\end{align*}
Now we substitute the result of Steps 1--3, and then \eqref{19101421} follows.

\smallskip
\noindent
\textit{Step 5. }
Here we prove that for any $(m,l)\in\mathbb N_0^2$ 
\begin{align*}
\mathcal P(m,l)
&
=
-\mathcal P_0(0)
+\mathcal P_0(m)
+\mathcal P_0(l)
-(z-4)\sum_{\mu=0}^{m-1}\sum_{\nu=0}^{l-1}\mathcal Q(\mu,\nu).
\end{align*}
Similarly to the arguments so far, 
by letting $n=(\mu+\nu+1,\mu-\nu)$ in \eqref{1909038}
\begin{align*}
0&=
\mathcal P(\mu+1,\nu+1)
-\mathcal P(\mu+1,\nu)
+\mathcal P(\mu,\nu)
-\mathcal P(\mu,\nu+1)
+(z-4)\mathcal Q(\mu,\nu)
.
\end{align*}
Then we take the double summation in $\nu$ and $\mu$, 
and the claim follows.

\smallskip
\noindent
\textit{Step 6. }
Finally we prove \eqref{19101420},
but it is rather straightforward to 
obtain \eqref{19101420}
after substituting \eqref{19101421} into the result of Step 5.
We omit the detail. 
Hence we are done.
\end{proof}

To solve the recurrence relations from Proposition~\ref{191010}
it would be useful to have the following simple summation formula.

\begin{lemma}\label{191017}
Let $p,q\in\mathbb Z$ with $p\le q$, $r\in \mathbb C$ and $k\in\mathbb N_0$. Then 
\begin{align*}
\sum_{j=p}^q(j+r)_k=\frac1{k+1}\bigl[(q+r)_{k+1}-(p+r-1)_{k+1}\bigr]
.
\end{align*}
\end{lemma}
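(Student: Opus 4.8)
\textbf{Proof proposal for Lemma~\ref{191017}.}

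The plan is to prove the identity by induction on $k$, since the Pochhammer symbol $(j+r)_k$ is a monic polynomial of degree $k$ in $j$ and the right-hand side is manifestly built to telescope. For the base case $k=0$, every summand equals $(j+r)_0=1$, so the left-hand side is $q-p+1$; on the right-hand side we have $\frac11\bigl[(q+r)_1-(p+r-1)_1\bigr]=(q+r)-(p+r-1)=q-p+1$, and the two agree.

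For the inductive step, assume the formula holds for some $k\in\mathbb N_0$. First I would record the elementary identity
\begin{align*}
(j+r)_{k+1}-(j+r-1)_{k+1}
=(j+r)_k\bigl[(j+r+k)-(j+r-1)\bigr]
=(k+1)(j+r)_k,
\end{align*}
which follows directly from the definition of the Pochhammer symbol by factoring out the common block $(j+r)\cdots(j+r+k-1)$, respectively $(j+r)(j+r+1)\cdots$ after shifting. Summing this over $j=p,\dots,q$ and telescoping gives
\begin{align*}
(k+1)\sum_{j=p}^q(j+r)_k
=\sum_{j=p}^q\bigl[(j+r)_{k+1}-(j+r-1)_{k+1}\bigr]
=(q+r)_{k+1}-(p+r-1)_{k+1},
\end{align*}
which is exactly the claimed formula after dividing by $k+1$. (Note this argument in fact proves the statement directly without reference to the inductive hypothesis, so the induction is optional; the telescoping identity for $(j+r)_{k+1}-(j+r-1)_{k+1}$ is the whole content.)

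There is essentially no obstacle here: the only point requiring a moment's care is the shift-and-factor manipulation establishing $(j+r)_{k+1}-(j+r-1)_{k+1}=(k+1)(j+r)_k$, and that one must handle the convention $(q)_0=1$ correctly when $k=0$ so that the formula is read with $(p+r-1)_{k+1}$ and $(q+r)_{k+1}$ as genuine length-$(k+1)$ products. Since $r\in\mathbb C$ is arbitrary and $p\le q$ are integers, no positivity or integrality of the arguments is needed, and the identity is purely formal (an equality of polynomials in $r$), so no analytic continuation or convergence issue arises.
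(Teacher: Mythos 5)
Your proof is correct and follows essentially the same route as the paper: both rest on the difference identity $(j+r)_{k+1}-(j+r-1)_{k+1}=(k+1)(j+r)_k$ followed by telescoping (the paper's "method of differences"). The inductive framing in your write-up is superfluous, as you yourself note, but this does not affect correctness.
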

\begin{proof}
Assume $k\neq 0$, and then we have 
\begin{align*}
(k+1)(j+r)_k
&=
(j+r)_{k}(j+r+k)-(j+r)_{k}(j+r-1)
\\&=
(j+r)_{k+1}-(j+r-1)_{k+1}.
\end{align*}
The resulting identity is in fact true also for $k=0$.
Hence the assertion follows by the method of differences.
\end{proof}

The first identity from Theorem~\ref{19080717} was already proved in 
Proposition~\ref{191230}. 
Now we prove the second and third ones by solving the recurrence relations from Proposition~\ref{191010}.

\begin{proof}[Proof of the second identity from Theorem~\ref{19080717}]
\textit{Step 1. } 
First we solve the equation \eqref{19101420} from Proposition~\ref{191010}. 
Let us define an operator $\mathcal T$ 
acting on a general sequence $\mathcal F\colon\mathbb Z^2\to\mathbb C$ as 
$$(\mathcal T\mathcal F)(m,l)
=\frac14(z-4)^2\sum_{|\mu|\le |m|}\sum_{|\nu|\le |l|}
\bigl(|m|-|\mu|\bigr)\bigl(|l|-|\nu|\bigr) \mathcal F(\mu,\nu).$$
Then we can expect a formal solution to \eqref{19101420} of the form 
\begin{align}
\mathcal P(m,l)
&
=
\sum_{k=0}^\infty
\left[\mathcal T^k
\left(\frac14|\mu||\nu|(z-4)
-\mathcal P_0(0)
+\mathcal P_0(\mu)
+\mathcal P_0(\nu)
\right)\right](m,l)
.
\label{191015}
\end{align}
Here let us verify the last formula. 
For that it suffices to show that for each 
$\mathcal F\colon\mathbb Z^2\to\mathbb C$ and 
$(m,n)\in\mathbb Z^2$ there exists $k_0\in\mathbb N_0$ such that 
for any $k\ge k_0$
$$(\mathcal T^k\mathcal F)(m,n)=0,$$
which implies that the series \eqref{191015} in fact is a pointwise finite sum. 
We can actually prove a more explicit formula: 
For any $k\in \mathbb N$
\begin{align}
\begin{split}
(\mathcal T^k\mathcal F)(m,l)
&
=
\frac{(z-4)^{2k}}{4(2k-1)!(2k-1)!}
\sum_{|\mu|\le |m|}
\sum_{|\nu|\le |l|}
\\&\phantom{{}={}}{}
\bigl(|m|-|\mu|-k+1\bigr)_{2k-1}
\bigl(|l|-|\nu|-k+1\bigr)_{2k-1}
\mathcal F(\mu,\nu)
.
\end{split}
\label{19101020}
\end{align}

Now we prove \eqref{19101020} by induction. 
By definition of $\mathcal T$ the formula \eqref{19101020} is obviously true for $k=1$.
Assume \eqref{19101020} for some $k\in \mathbb N$, 
and then 
\begin{align*}
(\mathcal T^{k+1}F)(m,l)
&
=
\frac{-(z-4)^{2(k+1)}}{16(2k-1)!(2k-1)!}
\sum_{|a|\le |m|}
\sum_{|b|\le |l|}
\sum_{|\mu|\le |a|}
\sum_{|\nu|\le |b|}
\bigl(|m|-|a|\bigr)\bigl(|l|-|b|\bigr)
\\&\phantom{{}={}}{}
\cdot
\bigl(|a|-|\mu|-k+1\bigr)_{2k-1}
\bigl(|b|-|\nu|-k+1\bigr)_{2k-1}
F(\mu,\nu)
.
\end{align*}
We can compute the summation with respect to $a$ using Lemma~\ref{191017} as 
\begin{align*}
&
\sum_{|a|\le |m|}
\sum_{|\mu|\le |a|}
\bigl(|m|-|a|\bigr)\bigl(|a|-|\mu|-k+1\bigr)_{2k-1}
\\&=
2\sum_{|\mu|\le |m|}
\sum_{|\mu|\le a\le |m|}
\bigl(|m|-a\bigr)\bigl(a-|\mu|-k+1\bigr)_{2k-1}
\\&=
2\sum_{|\mu|\le |m|}
\sum_{|\mu|\le a\le |m|}
\bigl[
\bigl(|m|-|\mu|+k\bigr)\bigl(a-|\mu|-k+1\bigr)_{2k-1}-\bigl(a-|\mu|-k+1\bigr)_{2k}
\bigr]
\\&=
2\sum_{|\mu|\le |m|}
\biggl\{\frac1{2k}\bigl(|m|-|\mu|+k\bigr)
\bigl(|m|-|\mu|-k+1\bigr)_{2k}
\\&\phantom{{}={}}{}
\qquad\qquad
-
\frac1{2k+1}
\bigl(|m|-|\mu|-k+1\bigr)_{2k+1}
\biggl\}
\\&=
\frac2{(2k+1)(2k)}\sum_{|\mu|\le |m|}
\bigl(|m|-|\mu|-k\bigr)_{2k+1}
.
\end{align*}
We can similarly show that 
\begin{align*}
&
\sum_{|b|\le |l|}
\sum_{|\nu|\le |b|}
\bigl(|l|-|b|\bigr)\bigl(|b|-|\nu|-k+1\bigr)_{2k-1}
=
\frac2{(2k+1)(2k)}\sum_{|\nu|\le |l|}
\bigl(|l|-|\nu|-k\bigr)_{2k+1}
.
\end{align*}
Hence the claimed formula \eqref{19101020}, and hence \eqref{191015}, are verified.

\smallskip
\noindent
\textit{Step 2. } 
Here we prove the second formula from Theorem~\ref{19080717}.
By the formulas \eqref{191015} and \eqref{19101020} we can write  
\begin{align}
\begin{split}
\mathcal P(m,l)
&
=
\frac14|m||l|(z-4)
-\mathcal P_0(0)
+\mathcal P_0(m)
+\mathcal P_0(l)
\\&\phantom{{}={}}{}
+\sum_{k=1}^\infty
\frac{(4-z)^{2k}}{4(2k-1)!(2k-1)!}
\\&\phantom{{}={}+\sum}{}
\cdot
\sum_{|\mu|\le |m|}
\sum_{|\nu|\le |l|}
\bigl(|m|-|\mu|-k+1\bigr)_{2k-1}
\bigl(|l|-|\nu|-k+1\bigr)_{2k-1}
\\&\phantom{{}={}+\sum}{}
\cdot
\left(\frac14|\mu||\nu|(z-4)
-\mathcal P_0(0)
+\mathcal P_0(\mu)
+\mathcal P_0(\nu)
\right)
.
\end{split}
\label{1910152213}
\end{align}
Let us further compute \eqref{1910152213}. 
Actually it suffices to compute the following summations in $\mu$ and $\nu$: 
By letting $\mu'=|m|-\mu+1$ and using Lemma~\ref{191017}
\begin{align}
\begin{split}
&
\sum_{|\mu|\le |m|}\bigl(|m|-|\mu|-k+1\bigr)_{2k-1}|\mu|
\\&
=
2\sum_{\mu'=1}^{|m|}\bigl(\mu'-k\bigr)_{2k-1}(|m|-\mu'+1)
\\&
=
2\sum_{\mu'=1}^{|m|}\bigl[\bigl(|m|+k\bigr)\bigl(\mu'-k\bigr)_{2k-1}-\bigl(\mu'-k\bigr)_{2k}\bigr]
\\&
=
2
\biggl[
\frac1{2k}\bigl(|m|+k\bigr)\bigl(|m|-k\bigr)_{2k}
-\frac1{2k+1}\bigl(|m|-k\bigr)_{2k+1}\biggr]
\\&
=
\frac{2(-1)^k}{(2k+1)(2k)}|m|(1+m)_{k}(1-m)_{k}
,
\end{split}
\label{1910152214}
\end{align}
and, similarly,  
\begin{align}
\sum_{|\nu|\le |l|}\bigl(|l|-|\nu|-k+1\bigr)_{2k-1}|\nu|
=
\frac{2(-1)^k}{(2k+1)(2k)}|l|(1+l)_{k}(1-l)_{k}
.
\label{1910152215}
\end{align}
We also have by letting $\mu'=|m|-\mu+1$ and using Lemma~\ref{191017}
\begin{align}
\begin{split}
\sum_{|\mu|\le |m|}\bigl(|m|-|\mu|-k+1\bigr)_{2k-1}
&
=
\bigl(|m|-k+1\bigr)_{2k-1}
+2\sum_{\mu'=1}^{|m|}\bigl(\mu'-k\bigr)_{2k-1}
\\&
=
\bigl(|m|-k+1\bigr)_{2k-1}
+\frac2{(2k)}\bigl(|m|-k\bigr)_{2k}
\\&
=
\frac{2(-1)^k}{(2k)}(m)_{k}(-m)_{k}
,
\end{split}
\label{1910152216}
\end{align}
and, similarly, 
\begin{align}
&\sum_{|\nu|\le |l|}\bigl(|l|-|\nu|-k+1\bigr)_{2k-1}
=
\frac{2(-1)^k}{(2k)}(l)_{k}(-l)_{k}
.
\label{1910152217}
\end{align}
Now we substitute \eqref{1910152214}--\eqref{1910152217}
into \eqref{1910152213}. Then after some computations we obtain 
\begin{align*}
&
\mathcal P(m,l)
\\&
=
\frac14|m||l|\sum_{k=0}^\infty
\frac{(1+m)_{k}(1-m)_{k}(1+l)_{k}(1-l)_{k}(z-4)^{2k+1}}{(2k+1)!(2k+1)!}
\\&\phantom{{}={}}{}
+\mathcal P_0(0)\sum_{k=0}^\infty\frac{(m)_{k}(-m)_{k}(l)_{k}(-l)_{k}(z-4)^{2k}}{(2k)!(2k)!}
\\&\phantom{{}={}}{}
+\sum_{\mu=1}^{|m|}[\mathcal P_0(\mu)-\mathcal P_0(\mu-1)]
%\\&\phantom{{}={}+}{}\cdot
\sum_{k=0}^\infty\frac{(1+|m|-\mu)_{k}(\mu-|m|)_{k}(l)_{k}(-l)_{k}(z-4)^{2k}}{(2k)!(2k)!}
\\&\phantom{{}={}}{}
+\sum_{\nu=1}^{|l|}[\mathcal P_0(\nu)-\mathcal P_0(\nu-1)]
%\\&\phantom{{}={}+}{}\cdot
\sum_{k=0}^\infty\frac{(m)_{k}(-m)_{k}(1+|l|-\nu)_{k}(\nu-|l|)_{k}(z-4)^{2k}}{(2k)!(2k)!}
,
\end{align*}
hence the second formula from Theorem~\ref{19080717} follows.
\end{proof}

\begin{proof}[Proof of the third identity from Theorem~\ref{19080717}]
Substitute \eqref{1910152213} into the 
the second identity from Proposition~\ref{191010}
to get 
\begin{align*}
\mathcal Q(m,l)
&
=
-\frac14
-\frac{(z-4)^2}{16}m(m+1)l(l+1)
+\frac{z-4}4(2m+1)(2l+1)\mathcal P_0(0)
\\&\phantom{{}={}}{}
-\frac{z-4}4(2l+1)\sum_{|\mu|\le m}\mathcal P_0(\mu)
-\frac{z-4}4(2m+1)\sum_{|\nu|\le l}\mathcal P_0(\nu)
\\&\phantom{{}={}}{}
-\sum_{k=1}^\infty
\frac{(z-4)^{2k+1}}{16(2k-1)!(2k-1)!}
\\&\phantom{{}={}+}{}
\cdot
\sum_{|a|\le m}\sum_{|b|\le l}
\sum_{|\mu|\le |a|}
\sum_{|\nu|\le |b|}
\bigl(|a|-|\mu|-k+1\bigr)_{2k-1}
\bigl(|b|-|\nu|-k+1\bigr)_{2k-1}
\\&\phantom{{}={}+}{}
\cdot
\left(\frac14|\mu||\nu|(z-4)
-\mathcal P_0(0)
+\mathcal P_0(\mu)
+\mathcal P_0(\nu)
\right)
.
\end{align*}
Note that by Lemma~\ref{191017}
\begin{align*}
\sum_{|a|\le m}\sum_{|\mu|\le |a|}
\bigl(|a|-|\mu|-k+1\bigr)_{2k-1}
&=
2\sum_{|\mu|\le m}\sum_{|\mu|\le a\le m}
\bigl(a-|\mu|-k+1\bigr)_{2k-1}
\\&=
\frac2{(2k)}\sum_{|\mu|\le m}\bigl(m-|\mu|-k+1\bigr)_{2k}
,
\end{align*}
and, similarly, 
\begin{align*}
&
\sum_{|b|\le l}\sum_{|\nu|\le |b|}
\bigl(|b|-|\nu|-k+1\bigr)_{2k-1}
=
\frac2{(2k)}\sum_{|\nu|\le l}\bigl(l-|\nu|-k+1\bigr)_{2k}
.
\end{align*}
Then we can proceed as 
\begin{align*}
\mathcal Q(m,l)
&
=
-\frac14
-\frac{(z-4)^2}{16}m(m+1)l(l+1)
+\frac{z-4}4(2m+1)(2l+1)\mathcal P_0(0)
\\&\phantom{{}={}}{}
-\frac{z-4}4(2l+1)\sum_{|\mu|\le m}\mathcal P_0(\mu)
-\frac{z-4}4(2m+1)\sum_{|\nu|\le l}\mathcal P_0(\nu)
\\&\phantom{{}={}}{}
-\sum_{k=1}^\infty
\frac{(z-4)^{2k+1}}{4(2k)!(2k)!}
\sum_{|\mu|\le m}\sum_{|\nu|\le l}
\bigl(m-|\mu|-k+1\bigr)_{2k}
\bigl(l-|\nu|-k+1\bigr)_{2k}
\\&\phantom{{}={}+}{}
\cdot
\left(\frac14|\mu||\nu|(z-4)
-\mathcal P_0(0)
+\mathcal P_0(\mu)
+\mathcal P_0(\nu)
\right)
.
\end{align*}
Similarly to \eqref{1910152214}--\eqref{1910152217}
we can compute
\begin{align}
\begin{split}
\sum_{|\mu|\le m}\bigl(m-|\mu|-k+1\bigr)_{2k}|\mu|
&=
\frac{2(-1)^{k+1}}{(2k+2)(2k+1)}(1+m)_{k+1}(-m)_{k+1},
\end{split}
\label{1910152214b}
\\
\begin{split}
\sum_{|\nu|\le l}\bigl(l-|\nu|-k+1\bigr)_{2k}|\nu|
&=
\frac{2(-1)^{k+1}}{(2k+2)(2k+1)}(1+l)_{k+1}(-l)_{k+1},
\end{split}
\label{1910152215b}
\\
\begin{split}
\sum_{|\mu|\le m}\bigl(m-|\mu|-k+1\bigr)_{2k}
&
=
\frac{(-1)^k}{2k+1}(2m+1)(1+m)_k(-m)_k
,
\end{split}
\label{1910152216b}
\\
\begin{split}
\sum_{|\nu|\le l}\bigl(l-|\nu|-k+1\bigr)_{2k}
&
=
\frac{(-1)^k}{2k+1}(2l+1)(1+l)_k(-l)_k
.
\end{split}
\label{1910152217b}
\end{align}
Now substitute \eqref{1910152214b}--\eqref{1910152217b},
and then after some computations 
\begin{align*}
\mathcal Q(m,l)
&
=
-\sum_{k=0}^\infty
\frac{(1+m)_{k}(-m)_{k}(1+l)_{k}(-l)_{k}(z-4)^{2k}}{4(2k)!(2k)!}
\\&\phantom{{}={}}{}
+\mathcal P_0(0)(2m+1)(2l+1)
\sum_{k=0}^\infty
\frac{(1+m)_k(-m)_k(1+l)_k(-l)_k(z-4)^{2k+1}}{4(2k+1)!(2k+1)!}
\\&\phantom{{}={}}{}
-
(2l+1)\sum_{\mu=-m}^{m}\mathcal P_0(\mu)
\\&\phantom{{}={}-}{}\cdot
\sum_{k=0}^\infty
\frac{(1+m-|\mu|)_{k}(|\mu|-m)_{k}(1+l)_k(-l)_k(z-4)^{2k+1}}{4(2k+1)!(2k)!}
\\&\phantom{{}={}}{}
-
(2m+1)\sum_{\nu=-l}^{l}\mathcal P_0(\nu)
\\&\phantom{{}={}-}{}\cdot
\sum_{k=0}^\infty
\frac{(1+m)_k(-m)_k(1+l-|\nu|)_{k}(|\nu|-l)_{k}(z-4)^{2k+1}}{4(2k+1)!(2k)!}
,
\end{align*}
hence the third formula from Theorem~\ref{19080717} follows.
\end{proof}

\begin{remark*}
The transformation formulas for ${_4F_3}$ from \cite{B,BS} 
partially match our purpose on the expansion around $z=0,8$.
However the resulting identity is in fact much more complicated
than those asserted in Theorem~\ref{19080717}. 
\end{remark*}

\begin{proof}[Proof of Corollary~\ref{200103}]
A fundamental solution to $H_0$ can be obtain immediately by letting $z=0$ in
the analytic parts of $\mathcal P(m,l)$ and $\mathcal Q(m,l)$ from Theorem~\ref{19080717}. 
Their rational parts form another fundamental solution, 
and their irrational parts form a generalized eigenfunction, since $H_0$ does not mix them up. 
The irrational parts further split into two parts 
with factors $1/\pi$ and $(\log 2)/\pi$, each of which is again a generalized eigenfunction
by the same reason. 
If we remove the terms with factor $(\log 2)/\pi$, then we obtain 
the eigenfunction of the assertion. We are done.
\end{proof}

\appendix

\section{Comparison with previous results}\label{19121823}

\subsection{Result from the previous paper~\cite{IJ}}\label{190831}

In \cite{IJ} we computed the singular parts of the resolvent expansions
around all the thresholds in all the dimensions. 
They should coincide with the results of this paper.
For a quick comparison here we review the results from \cite{IJ}
according to the present setting only for the low dimensions.

\begin{theorem}\label{1908311827}
Let $d=1$. 
There exists $\mathcal E_0(z,n)$ analytic in $|z|<4$ such that 
for any $z\in\mathbb C\setminus [0,4]$ with $|z|<4$ and $n\in\mathbb Z$
$$G(z,n)
=
\mathcal E_0(z,n)+\frac1{2\sqrt{-z}}F^{(1)}_B\left(\genfrac{}{}{0pt}{}{\frac12+n;\frac12-n}{\frac12};\frac{z}4\right).
$$
Similarly, there exists $\mathcal E_1(z,n)$ analytic in $|z-4|<4$ such that 
for any $z\in\mathbb C\setminus [0,4]$ with $|z-4|<4$ and $n\in\mathbb Z$
$$G(z,n)
=
\mathcal E_1(z,n)+\frac{(-1)^{n+1}}{2\sqrt{z-4}}
F^{(1)}_B\left(\genfrac{}{}{0pt}{}{\frac12+n;\frac12-n}{\frac12};\frac{4-z}4\right).
$$
\end{theorem}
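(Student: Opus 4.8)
The plan is to derive both formulas of Theorem~\ref{1908311827} from the already-proven one-dimensional expansions in Theorem~\ref{190716}, namely \eqref{19083117} and \eqref{19083118}, by rewriting the generalized hypergeometric pieces appearing there in terms of the Appell--Lauricella function $F_B^{(1)}$. By \eqref{191221} we have the identity $F(a,b;c;w)=F_B^{(1)}(a,b;c;w)$, so in dimension one the functions $F_B^{(1)}$ in the assertion are literally the ordinary hypergeometric functions ${_2F_1}$ already present in \eqref{19083117} and \eqref{19083118}. Concretely, for the threshold $z=0$, formula \eqref{19083117} reads
\begin{align*}
G(z,n)
=
-\frac{|n|}{2}F\left(\genfrac{}{}{0pt}{}{1+n,1-n}{\tfrac32};\frac{z}{4}\right)
+\frac1{2\sqrt{-z}}F\left(\genfrac{}{}{0pt}{}{\tfrac12+n,\tfrac12-n}{\tfrac12};\frac{z}{4}\right),
\end{align*}
valid for $z\in\mathbb C\setminus[0,\infty)$, and in particular for $z\in\mathbb C\setminus[0,4]$ with $|z|<4$.

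The key steps, in order, are as follows. First I would set $\mathcal E_0(z,n)=-\tfrac{|n|}{2}F(1+n,1-n;\tfrac32;z/4)$ and observe that this is a polynomial (in fact a terminating series, since $1-n$ or $1+n$ is a non-positive integer when $n\neq 0$, and it is identically zero when $n=0$), hence entire in $z$, so in particular analytic in $|z|<4$. Then \eqref{19083117} immediately gives $G(z,n)=\mathcal E_0(z,n)+\tfrac1{2\sqrt{-z}}F(\tfrac12+n,\tfrac12-n;\tfrac12;z/4)$, and invoking \eqref{191221} to identify $F$ with $F_B^{(1)}$ yields the first assertion. Second, for the threshold $z=4$, I would argue identically starting from \eqref{19083118}: set $\mathcal E_1(z,n)=-\tfrac{(-1)^{n+1}|n|}{2}F(1+n,1-n;\tfrac32;(4-z)/4)$, note it is a polynomial in $z$ hence analytic in $|z-4|<4$, and read off $G(z,n)=\mathcal E_1(z,n)+\tfrac{(-1)^{n+1}}{2\sqrt{z-4}}F(\tfrac12+n,\tfrac12-n;\tfrac12;(4-z)/4)$, again using \eqref{191221}.

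Honestly, there is no genuine obstacle here: the theorem is essentially a restatement of Theorem~\ref{190716} in the notation of \cite{IJ}, and the only thing to check is that the claimed analytic part $\mathcal E_q$ really is analytic on the stated disk, which is transparent once one notices the ${_2F_1}$ with upper parameter $1\pm n$ terminates. The one small point of care is the branch of $\sqrt{\cdot}$: in the first formula $\sqrt{-z}$ is the principal branch, defined for $z\in\mathbb C\setminus[0,\infty)$, which is consistent with the restriction $z\in\mathbb C\setminus[0,4]$, $|z|<4$; in the second formula $\sqrt{z-4}$ is principal, defined for $z\in\mathbb C\setminus(-\infty,4]$, again consistent. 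So the proof is a two-line invocation of \eqref{19083117}, \eqref{19083118} and \eqref{191221}, together with the observation about the terminating hypergeometric series; the ``hard part,'' such as it is, is merely bookkeeping that the domains of validity and branch choices match those already established in Theorem~\ref{190716}.
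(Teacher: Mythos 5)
Your argument is correct: on the stated domains, \eqref{19083117} and \eqref{19083118} apply (note $\{|z|<4\}\setminus[0,4]=\{|z|<4\}\setminus[0,\infty)$ and $\{|z-4|<4\}\setminus[0,4]=\{|z-4|<4\}\setminus(-\infty,4]$), the terms you assign to $\mathcal E_0$, $\mathcal E_1$ are terminating ${_2F_1}$'s (or vanish for $n=0$), hence entire, and \eqref{191221} identifies the remaining ${_2F_1}$ with $F_B^{(1)}$. However, this is not how the paper treats the statement: Theorem~\ref{1908311827} is not proved in this paper at all — it is quoted from the earlier work \cite{IJ}, where it was obtained by a different method (the general branching analysis of the resolvent at thresholds), and Appendix~\ref{190831b} then uses it in the opposite logical direction, as independently known input against which the new expansions of Theorem~\ref{190716} are checked; for $d=1$ the paper merely remarks that the singular parts "clearly coincide due to \eqref{191221}," which is exactly the observation you have spelled out. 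So your proposal is a valid, self-contained derivation of the quoted statement from Theorem~\ref{190716} (non-circular, since Theorem~\ref{190716} is proved via Theorem~\ref{181110} and classical ${_2F_1}$ transformations), and it in addition exhibits the analytic parts $\mathcal E_0$, $\mathcal E_1$ explicitly as terminating polynomials, which \cite{IJ} does not provide; what it does not do is serve the appendix's purpose of an independent consistency check, since there the old result is assumed rather than re-derived.
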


\begin{theorem}\label{1908311828}
Let $d=2$. 
There exists $\mathcal E_0(z,n)$ analytic in $|z|<4$ such that 
for any $z\in\mathbb C\setminus[0,8]$ with $|z|<4$ and $n\in\mathbb Z^2$
\begin{align*}
G(z,n)
&=
\mathcal E_0(z,n)-\frac1{4\pi}[\log (-z)]
F^{(2)}_B\left(\genfrac{}{}{0pt}{}{\frac12+n_1,\frac12+n_2;\frac12-n_1,\frac12-n_2}1;
\frac{z}4,\frac z4\right);
\intertext{There exists $\mathcal E_1(z,n)$ analytic in $|z-4|<4$ such that 
for any $|z-4|<4$ 
with $\mathop{\mathrm{Im}}z>0$ and $n\in\mathbb Z^2$}
G(z,n)
&=
\mathcal E_1(z,n)
-\frac{\mathrm i}{8\pi}\left[\log\left(-\frac{(z-4)^2}{16}\right)\right]
\\&\phantom{{}={}}{}\cdot
\biggl[
(-1)^{n_2}F^{(2)}_B
\left(\genfrac{}{}{0pt}{}{\frac12+n_1,\frac12+n_2;\frac12-n_1,\frac12-n_2}1;\frac{z-4}4,\frac{4-z}4\right)
\\&\phantom{{}={}\cdot\biggl[}{}
+
(-1)^{n_1}F^{(2)}_B\left(\genfrac{}{}{0pt}{}{\frac12+n_1,\frac12+n_2;\frac12-n_1,\frac12-n_2}1;\frac{4-z}4,\frac{z-4}4\right)
\biggr];
\intertext{There exists $\mathcal E_2(z,n)$ analytic in $|z-8|<4$ such that 
for any $z\in\mathbb C\setminus[0,8]$ with $|z-8|<4$ and $n\in\mathbb Z^2$}
G(z,n)
&=
\mathcal E_2(z,n)
+\frac{(-1)^{|n|}}{4\pi}[\log (z-8)]
\\&\phantom{{}={}}{}\cdot
F^{(2)}_B\left(\genfrac{}{}{0pt}{}{\frac12+n_1,\frac12+n_2;\frac12-n_1,\frac12-n_2}1;\frac{8-z}4,\frac{8-z}4\right).
\end{align*}
\end{theorem}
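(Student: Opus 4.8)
The plan is to obtain Theorem~\ref{1908311828} as the $d=2$ instance of the general structural result~\eqref{191226} of \cite{IJ}: near each threshold $4q$, $q=0,1,2$, one has a splitting $G(z,n)=\mathcal E_q(z,n)+f_q(z)\mathcal F_q(z,n)$ with $\mathcal E_q$ analytic in $|z-4q|<4$, where in the even dimension $d=2$ (so $(d-2)/2=0$) the singular factor is simply $f_q(z)=\log(z-4q)$, and where $\mathcal F_q$ is an explicit Appell--Lauricella hypergeometric function of type $B$. What then remains is bookkeeping of two kinds. First, one transcribes the explicit $F^{(d)}_B$-formula for $\mathcal F_q$ from \cite{IJ} into the normalization $(2\pi)^{-d}\int_{\mathbb T^d}$ of~\eqref{11.4.17.3.5b}. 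Second, one replaces the abstract factor $\log(z-4q)$ by the concrete branch written in each display, namely $\log(-z)$ for $q=0$, $\log\!\bigl(-(z-4)^2/16\bigr)$ for $q=1$, and $\log(z-8)$ for $q=2$; the point is that two admissible choices of such a factor differ by a constant multiple of the (analytic) function $\mathcal F_q$, which may be absorbed into $\mathcal E_q$ without destroying its analyticity on the full disk.

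For $q=0$ this is immediate: if $z\in\mathbb C\setminus[0,8]$ with $|z|<4$, then $z$ misses $[0,4)$, so $-z$ misses $(-\infty,0]$ and the principal $\log(-z)$ is analytic there; the coefficient $-1/(4\pi)$ and the diagonal arguments $(z/4,z/4)$ --- both lattice directions meet the bottom of the spectrum at the same threshold --- are precisely those of \cite{IJ}. For $q=2$ I would avoid redoing the analysis and instead use the spectral reflection: the both-coordinate version of the transformation in~\eqref{19122915}, $V\colon\ell^2(\mathbb Z^2)\to\ell^2(\mathbb Z^2)$ with $(Vu)[n]=(-1)^{n_1+n_2}u[n]$, satisfies $V^{-1}H_0V=8-H_0$, hence $V^{-1}R_0(z)V=-R_0(8-z)$; reading this on the convolution kernels and using $V\delta_0=\delta_0$ gives $G(z,n)=-(-1)^{|n|}G(8-z,n)$. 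Substituting the $q=0$ display evaluated at $8-z$ and using $\log(-(8-z))=\log(z-8)$ produces the $q=2$ display with $\mathcal E_2(z,n)=-(-1)^{|n|}\mathcal E_0(8-z,n)$, analytic in $|z-8|<4$.

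The embedded threshold $q=1$ is the delicate case, and I expect it to be the main obstacle. The symbol $4-2\cos\theta_1-2\cos\theta_2$ attains the value $4$ along the two curves $\theta_1+\theta_2\equiv\pi$ and $\theta_1-\theta_2\equiv\pi$, which cross at the saddle points $(\pi,0)$ and $(0,\pi)$; this two-saddle structure is what makes the singular part a \emph{sum} of two copies of $F^{(2)}_B$, one with arguments $(\tfrac{z-4}4,\tfrac{4-z}4)$ and one with them transposed, carrying the prefactors $(-1)^{n_1}$ and $(-1)^{n_2}$ that come from the values of $\mathrm e^{\mathrm in\theta}$ at these two points. The prefactor $\mathrm i$ and the hypothesis $\mathop{\mathrm{Im}}z>0$ are forced by the branch: on $\{|z-4|<4\}\cap\{\mathop{\mathrm{Im}}z>0\}$ the quantity $-(z-4)^2/16$ reaches $(-\infty,0]$ only when $z-4$ is real, which is excluded, so the principal $\log\!\bigl(-(z-4)^2/16\bigr)$ is analytic there. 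The genuinely nontrivial statement here, which rests on \cite{IJ} and not on the bookkeeping above, is that after subtracting this explicit singular part the remainder $\mathcal E_1$ continues analytically from the upper half-disk to the whole disk $|z-4|<4$.

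A self-contained derivation inside the present paper is also possible, and it is worth saying where the effort then migrates. For $q=0,2$ one isolates the logarithm directly in Corollary~\ref{1908056}, or in the Bessel--Laplace representation~\eqref{1907152}: for $d=2$ the integrand $I_{n_1}(2t)I_{n_2}(2t)$ grows like $\mathrm e^{4t}/(4\pi t)$ as $t\to\infty$, and the tail $\int^{\infty}\mathrm e^{(z-4)t}\mathrm e^{4t}t^{-1}\,\mathrm dt$ contributes exactly the $-\tfrac1{4\pi}\log(-z)$ singularity. For $q=1$ one starts instead from Theorem~\ref{19083122}, where the logarithm is already exhibited. In all three cases the surviving task is the same: to recognize the resulting one-variable hypergeometric coefficient series as the two-variable $F^{(2)}_B$ of the statement. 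This identification --- matching ${_pF_q}$-type sums with $F^{(2)}_B$ --- is precisely the sort of nontrivial identity to which Appendix~\ref{19121823} is devoted, and carrying it out is where the real work of this route lies.
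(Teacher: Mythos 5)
Your proposal matches the paper's treatment: Theorem~\ref{1908311828} appears in Appendix~\ref{190831} purely as a review of the results of \cite{IJ} transcribed into the present normalization, with no independent proof given in this paper, and your main route --- invoking the expansion \eqref{191226} together with its explicit $F_B^{(d)}$ singular part from \cite{IJ}, then checking that the displayed branches $\log(-z)$, $\log\bigl(-(z-4)^2/16\bigr)$ on $\mathop{\mathrm{Im}}z>0$, and $\log(z-8)$ are the admissible ones on the respective domains --- is exactly that citation-plus-bookkeeping. Your reflection identity $G(z,n)=-(-1)^{|n|}G(8-z,n)$ reducing the threshold $z=8$ to $z=0$ is correct and a nice economy, though the paper does not need it since \cite{IJ} already covers all thresholds.
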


\subsection{Comparison of the singular parts}\label{190831b}

For $d=1$ the singular parts of Theorem~\ref{190716} and Theorem~\ref{1908311827} clearly coincide
due to \eqref{191221}.
Hence we examine only the case $d=2$. 
We begin with the embedded threshold $z=4$. 

\begin{corollary}\label{191020}
The singular parts from Theorems~\ref{19083122} and \ref{1908311828} around $z=4$ coincide,
which is equivalent to the following relations: 
For all $k\in\mathbb N_0$ and $n\in\mathbb Z^2$ 
\begin{align*}
%\begin{split}
&
\sum_{|\alpha|=2k}
\frac{(-1)^{\alpha_1}}{\alpha_1!\alpha_2!}
\left(\frac12+n_1\right)_{\alpha_1}
\left(\frac12+n_2\right)_{\alpha_2}
\left(\frac12-n_1\right)_{\alpha_1}
\left(\frac12-n_2\right)_{\alpha_2}
\\
&=
\frac{4^{2k}}{(2k)!}\left(\frac{1+n_1+n_2}2\right)_k
\left(\frac{1+n_1-n_2}2\right)_k
\left(\frac{1-n_1+n_2}2\right)_k
\left(\frac{1-n_1-n_2}2\right)_k
%\end{split}
%\label{190904}
\intertext{and }
&\sum_{|\alpha|=2k+1}
\frac{(-1)^{\alpha_1}}{\alpha_1!\alpha_2!}\left(\frac12+n_1\right)_{\alpha_1}
\left(\frac12+n_2\right)_{\alpha_2}
\left(\frac12-n_1\right)_{\alpha_1}
\left(\frac12-n_2\right)_{\alpha_2}
\\
&=
\frac{4^{2k+1}}{(2k+1)!}\left(\frac{n_1+n_2}2\right)_{k+1}
\left(\frac{n_1-n_2}2\right)_{k+1}
\left(\frac{2-n_1+n_2}2\right)_k
\left(\frac{2-n_1-n_2}2\right)_k
.
\end{align*}
\end{corollary}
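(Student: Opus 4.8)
The plan is to prove the two stated identities directly by generating-function manipulations, since Corollary~\ref{191020} is purely a statement about Pochhammer symbols with no analytic content once the coincidence of singular parts has been reduced to it. First I would fix $n\in\mathbb Z^2$ and treat the left-hand sides as coefficient extractions: the double sum over $|\alpha|=N$ (with $N=2k$ or $N=2k+1$) of $\frac{(-1)^{\alpha_1}}{\alpha_1!\alpha_2!}(\tfrac12+n_1)_{\alpha_1}(\tfrac12-n_1)_{\alpha_1}(\tfrac12+n_2)_{\alpha_2}(\tfrac12-n_2)_{\alpha_2}$ is, up to the sign factor, the coefficient of $t^N$ in the product of two one-variable generating functions
\begin{align*}
\sum_{\alpha_1\ge 0}\frac{(\tfrac12+n_1)_{\alpha_1}(\tfrac12-n_1)_{\alpha_1}}{\alpha_1!\,\alpha_1!}(-t)^{\alpha_1}\cdot\alpha_1!\,,
\end{align*}
which is not quite a ${}_2F_1$ because of the missing $\alpha_1!$ in the denominator. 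Rather than fight that, I would instead recognize each one-variable factor $\sum_{\alpha_1}\frac{(\tfrac12+n_1)_{\alpha_1}(\tfrac12-n_1)_{\alpha_1}}{\alpha_1!}(-t)^{\alpha_1}$ — after dividing the whole identity by a common factorial — as related to a $(1-\cdot)^{-\text{power}}$ via the binomial-type collapse $(\tfrac12+n)_\alpha(\tfrac12-n)_\alpha/\alpha!$ being the $\alpha$th coefficient of a hypergeometric function that sums in closed form. Concretely, the Gauss evaluation or the quadratic identity ${}_2F_1(\tfrac12+n,\tfrac12-n;\tfrac12;w)=\tfrac12[(1+\sqrt w)^{-2n}+(1-\sqrt w)^{-2n}]/?$-type formula (essentially Chebyshev) should let me write each factor in closed form; then the Cauchy product of the two closed forms in the variable $t$ can be re-expanded and its $t^N$ coefficient read off, yielding exactly the product of four Pochhammer symbols on the right after splitting $N$ into even/odd.

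An alternative and probably cleaner route, which I would pursue in parallel, is to proceed by induction on $k$. The right-hand sides satisfy simple two-term contiguous recurrences in $k$ (ratios of consecutive terms are rational in $k$), and the left-hand sides satisfy a recurrence coming from the fact that $\sum_{|\alpha|=N}$ and $\sum_{|\alpha|=N-1}$ are linked by the one-dimensional three-term relation for $(\tfrac12\pm n_j)_{\alpha_j}$. So I would: (i) check the base cases $k=0$ (where both identities are elementary — the even one reads $1=1$ after cancellation, the odd one is a short computation with $\alpha\in\{(1,0),(0,1)\}$); (ii) derive the recurrence satisfied by the left-hand side as a function of $k$ by isolating the $\alpha_1=0$ and $\alpha_2=0$ boundary terms when passing from degree $N$ to degree $N+2$; (iii) verify the right-hand side obeys the same recurrence, which is a finite rational-function identity in $n_1,n_2,k$.

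The key technical input I would lean on is precisely the binomial-type theorem already invoked in the proof of Corollary~\ref{1908056}, namely comparing coefficients in $(1+t)^{a}(1+t)^{b}=(1+t)^{a+b}$ and its Pochhammer-weighted analogues; here the relevant analogue is the Vandermonde/Chu convolution $\sum_{j}\binom{x}{j}\binom{y}{N-j}=\binom{x+y}{N}$ in the guise of Pochhammer symbols. In fact I expect that after the substitution $\alpha_1\leftrightarrow$ a shifted index, the left-hand sum over $|\alpha|=N$ becomes a single Saalschützian ${}_3F_2$ at $1$, which Pfaff–Saalschütz evaluates in closed product form — and that product is exactly the right-hand side. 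So the cleanest proof is: rewrite the left side as a terminating balanced ${}_3F_2(1)$, apply Pfaff–Saalschütz, and match. The main obstacle will be the bookkeeping of the sign factor $(-1)^{\alpha_1}$ and the even/odd split: the $(-1)^{\alpha_1}$ must be absorbed correctly (e.g. by $(-1)^{\alpha_1}=(-1)^N(-1)^{\alpha_2}$ on the slice $|\alpha|=N$) so that the resulting ${}_3F_2$ is genuinely Saalschützian rather than merely well-poised, and one has to be careful that the parameters $\tfrac12\pm n_j$ produce the half-integer shifts $\tfrac{1\pm n_1\pm n_2}{2}$ in exactly the claimed pattern. Once that alignment is done the evaluation is immediate, so I anticipate the entire proof is short and the paper's "Let us omit the rest of the arguments" is justified — but getting the signs and the index parity exactly right is the step where an error would most likely creep in.
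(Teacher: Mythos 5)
Your proposal addresses only half of the corollary and, for that half, leans on a tool that does not apply. Note first that the paper's own proof never proves the two Pochhammer identities directly: the coincidence of the singular parts is immediate because Theorems~\ref{19083122} and \ref{1908311828} expand the same function $G(z,n)$, and the entire content of the written proof is the \emph{equivalence} step --- introducing the coefficients $s_j[n]$ of the singular part of Theorem~\ref{1908311828} and carrying out the parity bookkeeping with the factors $(-1)^{n_1}+(-1)^{n_2}$ (for $j$ even) and $(-1)^{n_1}-(-1)^{n_2}$ (for $j$ odd) to show that the coincidence is equivalent to the displayed relations; the identities themselves are then automatic, and the subsequent remark says explicitly that no direct proof is attempted (they were checked by Maple). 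You take the reduction to the identities as given and supply only a plan for a direct proof, so the part of the corollary the paper actually proves --- the equivalence, with its sign/parity analysis --- is absent from your argument. (Also, the phrase ``Let us omit the rest of the arguments'' that you cite belongs to the proof of Corollary~\ref{1908056}, not to this corollary.)

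Concerning the direct proof you sketch, the ``cleanest route'' fails as stated. Writing $\tfrac1{\alpha_1!\alpha_2!}=\tfrac1{N!}\binom{N}{\alpha_1}$ on the slice $|\alpha|=N$ and converting $( \tfrac12\pm n_2)_{N-j}$ and $1/(N-j)!$ in the standard way, the left-hand side becomes
\begin{equation*}
\frac{\bigl(\tfrac12+n_2\bigr)_N\bigl(\tfrac12-n_2\bigr)_N}{N!}\,
{}_3F_2\!\left(\genfrac{}{}{0pt}{}{-N,\ \tfrac12+n_1,\ \tfrac12-n_1}{\tfrac12-n_2-N,\ \tfrac12+n_2-N};1\right),
\end{equation*}
and this series is \emph{not} Saalschützian: the denominator-parameter sum exceeds the numerator-parameter sum by $1-N$ rather than by $1$, for every $N\ge0$, so Pfaff--Saalschütz is inapplicable. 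It is instead of Whipple's type $(a,1-a,c;e,1+2c-e;1)$ with $a=\tfrac12+n_1$, $c=-N$, $e=\tfrac12-n_2-N$, and it is the terminating form of Whipple's evaluation --- whose output naturally splits according to the parity of $N$ --- that produces the asserted right-hand sides. Your first route is likewise not available as described: the one-variable factors $\sum_\alpha\frac{(\frac12+n_j)_\alpha(\frac12-n_j)_\alpha}{\alpha!}(\mp t)^\alpha$ are ${}_2F_0$'s (formal, divergent), not ${}_2F_1$'s with lower parameter $\tfrac12$, so no Gauss or quadratic-transformation closed form applies to them; and the induction route would require a recurrence for the full sum obtained by creative telescoping, not the naive three-term relation you invoke. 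The base cases you check are fine, but as it stands the central evaluation step is wrong, so the proposal has a genuine gap.
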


\begin{proof}
Since Theorems~\ref{19083122} and \ref{1908311828} provide 
asymptotic expansions of the same function, 
it is clear that their singular parts have to coincide. 

Let us verify that this is equivalent to the asserted identities. 
For that we set 
\begin{align*}
s_j[n]
&=
\frac18
\biggl[
(-1)^{n_2}
\sum_{|\alpha|=j}
\frac{
(-1)^{\alpha_2}\left(\frac12+n_1\right)_{\alpha_1}
\left(\frac12+n_2\right)_{\alpha_2}
\left(\frac12-n_1\right)_{\alpha_1}
\left(\frac12-n_2\right)_{\alpha_2}
}{j!\alpha_1!\alpha_2!}
\\&\phantom{{}={}\cdot\biggl[}{}
+
(-1)^{n_1}
\sum_{|\alpha|=j}
\frac{
(-1)^{\alpha_1}\left(\frac12+n_1\right)_{\alpha_1}
\left(\frac12+n_2\right)_{\alpha_2}
\left(\frac12-n_1\right)_{\alpha_1}
\left(\frac12-n_2\right)_{\alpha_2}
}{j!\alpha_1!\alpha_2!}
\biggr],
\end{align*}
which are the coefficients of the singular part from Theorem~\ref{1908311828}. 
First let $j=2k$ be even, and then  
\begin{align*}
s_{2k}[n]
&=
\frac{(-1)^{n_1}+(-1)^{n_2}}{8(2k)!}
\\&\phantom{{}={}}{}\cdot
\sum_{|\alpha|=2k}
\frac{
(-1)^{\alpha_1}
}{\alpha_1!\alpha_2!}\left(\frac12+n_1\right)_{\alpha_1}
\left(\frac12+n_2\right)_{\alpha_2}
\left(\frac12-n_1\right)_{\alpha_1}
\left(\frac12-n_2\right)_{\alpha_2}
.
\end{align*}
Then the coincidence of the singular parts implies 
the first identity of the assertion for $|n|$ even,
and hence also for $|n|$ odd. 
Next let $j=2k+1$ be odd, and we have  
\begin{align*}
s_{2k+1}[n]
&=
\frac{(-1)^{n_1}-(-1)^{n_2}}{8(2k+1)!}
\\&\phantom{{}={}}{}\cdot
\sum_{|\alpha|=2k+1}
\frac{(-1)^{\alpha_1}}{\alpha_1!\alpha_2!}
\left(\frac12+n_1\right)_{\alpha_1}
\left(\frac12+n_2\right)_{\alpha_2}
\left(\frac12-n_1\right)_{\alpha_1}
\left(\frac12-n_2\right)_{\alpha_2}
.
\end{align*}
Similarly to the above, the coincidence of the singular parts implies 
the first asserted identity for $|n|$ odd,
and hence also for $|n|$ even. 
The converse is clear from the above argument, and we are done.
\end{proof}
\begin{remark*}
We do not try to give direct proofs of the identities from Corollary~\ref{191020}.
The identities in Corollary~\ref{191020} have been verified using the computer algebra system Maple.
\end{remark*}

Similarly we discuss the endpoint thresholds $z=0,8$ as follows. 

\begin{corollary}\label{191021}
The singular parts from Theorems~\ref{19080717} and \ref{1908311828} around 
$z=0,8$ coincide, i.e., for all $|w|<1$ and $(m,l)\in\mathbb Z^2$ 
\begin{align*}
%\begin{split}
&(-1)^{m+l}
F^{(2)}_B\left(\genfrac{}{}{0pt}{}{\frac12+m+l,\frac12+m-l;\frac12-m-l,\frac12-m+l}1;w,w\right)
\\&
=
{_2F_1}\left(\genfrac{}{}{0pt}{}{\frac12,\frac12}{1};w(2-w)\right)
{_4F_3}\left(\genfrac{}{}{0pt}{}{m,-m,l,-l}{1,\frac12,\frac12};(w-1)^2\right)
\\&\phantom{{}={}}{}
+\sum_{\mu=1}^{|m|}
(-1)^\mu
\left[{_2F_1}\left(\genfrac{}{}{0pt}{}{\frac12+\mu,\frac12-\mu}{1};w(2-w)\right)
+{_2F_1}\left(\genfrac{}{}{0pt}{}{\mu-\frac12,\frac32-\mu}{1};w(2-w)\right)\right]
\\&\phantom{{}={}}\qquad{}\cdot
{_4F_3}\left(\genfrac{}{}{0pt}{}{1+|m|-\mu,\mu-|m|,l,-l}{1,\frac12,\frac12};(w-1)^2\right)
\\&\phantom{{}={}}{}
+\sum_{\nu=1}^{|l|}
(-1)^\nu
\left[
{_2F_1}\left(\genfrac{}{}{0pt}{}{\frac12+\nu,\frac12-\nu}{1};w(2-w)\right)
+{_2F_1}\left(\genfrac{}{}{0pt}{}{\nu-\frac12,\frac32-\nu}{1};w(2-w)\right)\right]
\\&\phantom{{}={}}\qquad{}\cdot
{_4F_3}\left(\genfrac{}{}{0pt}{}{m,-m,1+|l|-\nu,\nu-|l|}{1,\frac12,\frac12};(w-1)^2\right)
,
%\end{split}
%\label{19101522}
\intertext{and for all $|w|<1$ and $(m,l)\in\mathbb N_0^2$}
%\begin{split}
&(-1)^{m+l}
F^{(2)}_B\left(\genfrac{}{}{0pt}{}{\frac32+m+l,\frac12+m-l;-\frac12-m-l,\frac12-m+l}1;w,w\right)
\\&
={}
(2m+1)(2l+1)(w-1)
{_2F_1}\left(\genfrac{}{}{0pt}{}{\frac12,\frac12}{1};w(2-w)\right)
\\&\phantom{{}={}}\qquad{}\cdot
{_4F_3}\left(\genfrac{}{}{0pt}{}{1+m,-m,1+l,-l}{1,\frac32,\frac32}; (w-1)^2\right)
\\&\phantom{{}={}}{}
-
(2l+1)(w-1)
\sum_{\mu=-m}^{m}(-1)^\mu
{_2F_1}\left(\genfrac{}{}{0pt}{}{\frac12+\mu,\frac12-\mu}{1};w(2-w)\right)
\\&\phantom{{}={}}\qquad{}\cdot
{_4F_3}\left(\genfrac{}{}{0pt}{}{1+m-|\mu|,|\mu|-m,1+l,-l}{1,\frac12,\frac32}; (w-1)^2\right)
\\&\phantom{{}={}}{}
-
(2m+1)(w-1)
\sum_{\nu=-l}^{l}(-1)^\nu
{_2F_1}\left(\genfrac{}{}{0pt}{}{\frac12+\nu,\frac12-\nu}{1};w(2-w)\right)
\\&\phantom{{}={}}\qquad{}\cdot
{_4F_3}\left(\genfrac{}{}{0pt}{}{1+m,-m,1+l-|\nu|,|\nu|-l}{1,\frac12,\frac32}; (w-1)^2\right)
.
%\end{split}
%\label{1910152207}
\end{align*}
\end{corollary}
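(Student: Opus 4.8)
The plan is to follow the pattern of the proof of Corollary~\ref{191020}. Theorems~\ref{19080717} and \ref{1908311828} both furnish asymptotic expansions around $z=0$ of the single function $\mathcal P(m,l)=(-1)^{m+l}G(z,m+l,m-l)$, and, through the third formula of Theorem~\ref{19080717}, of $\mathcal Q(m,l)=(-1)^{m+l}G(z,m+l+1,m-l)$; hence their singular parts must agree. Near $z=0$ each expansion has the form $a(z)+[\log(-z)]\,b(z)$ with $a,b$ analytic, and such a decomposition determines $b$ uniquely (since $\log(-z)$ is not single-valued), so the coefficients of $\log(-z)$ produced by the two theorems coincide as functions analytic near $z=0$. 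The asserted identities are precisely the explicit form of this coincidence, so it remains to compute both coefficients and match them.

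From Theorem~\ref{1908311828}, with $n=(m+l,m-l)$ and the prefactor $(-1)^{m+l}$ in $\mathcal P(m,l)=(-1)^{m+l}G$, the coefficient of $\log(-z)$ in $\mathcal P(m,l)$ equals $-\tfrac{(-1)^{m+l}}{4\pi}F^{(2)}_B(\tfrac12+m+l,\tfrac12+m-l;\tfrac12-m-l,\tfrac12-m+l;1;\tfrac z4,\tfrac z4)$, and likewise with $n=(m+l+1,m-l)$ for $\mathcal Q(m,l)$. From Theorem~\ref{19080717} only the summands $\mathcal P_0(\cdot)$ carry a logarithm; by the second formula of Proposition~\ref{191230}, with the sign $+$ chosen because $\mathop{\mathrm{Re}}(4-z)>0$ for $z$ near $0$, and using $\log\tfrac{z(z-8)}{16}=\log(-z)+(\text{analytic})$ there, the coefficient of $\log(-z)$ in $\mathcal P_0(\mu)$ is $-\tfrac{(-1)^\mu}{4\pi}{_2F_1}(\tfrac12+\mu,\tfrac12-\mu;1;\tfrac{z(8-z)}{16})$, so that $\mathcal P_0(\mu)-\mathcal P_0(\mu-1)$ contributes $-\tfrac{(-1)^\mu}{4\pi}\bigl[{_2F_1}(\tfrac12+\mu,\tfrac12-\mu;1;\cdot)+{_2F_1}(\mu-\tfrac12,\tfrac32-\mu;1;\cdot)\bigr]$. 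Substituting these into the second and third formulas of Theorem~\ref{19080717} --- the polynomial terms $\tfrac{z-4}4ml\,{_4F_3}(\cdots)$ and $-\tfrac14{_4F_3}(\cdots)$ being free of logarithms --- and writing $z-4=4(w-1)$ with $w=z/4$, so that $\tfrac{z(8-z)}{16}=w(2-w)$ and $\tfrac{(z-4)^2}{16}=(w-1)^2$, one reads off that the coefficient of $\log(-z)$ in $\mathcal P(m,l)$, resp.\ $\mathcal Q(m,l)$, equals $-\tfrac1{4\pi}$ times the right-hand side of the first, resp.\ second, asserted identity. Equating the two expressions for this coefficient and cancelling $-\tfrac1{4\pi}$ yields those identities for $w$ near $0$.

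It then remains to extend to all $|w|<1$ by analytic continuation. Every ${_4F_3}$ appearing terminates (each has a non-positive integer among its upper parameters) and is thus a polynomial in $(w-1)^2$; the arguments $w(2-w)=1-(1-w)^2$ of the ${_2F_1}$'s avoid the cut $[1,\infty)$ when $|w|<1$, since $(1-w)^2\notin(-\infty,0]$ there; and $F^{(2)}_B(\cdots;w,w)$ is analytic on $|w|<1$. So both sides of each identity are analytic in $|w|<1$ and agree near $w=0$. In the first identity, for $(m,l)\in\mathbb Z^2$ rather than $\mathbb N_0^2$ one checks that each term depends only on $|m|,|l|$ --- the relevant hypergeometric functions being invariant under $m\mapsto-m$ and $l\mapsto-l$ --- so it suffices to treat $(m,l)\in\mathbb N_0^2$. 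Performing the same computation near $z=8$ (now with the sign $-$ in Proposition~\ref{191230}, the prefactor $(-1)^{|n|}$ in Theorem~\ref{1908311828}, and $w=(8-z)/4$) reproduces the identical functional identity, so it serves only as a consistency check. The converse --- that the displayed identities force the two singular parts to coincide --- follows by reading the above computation backwards.

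I expect no conceptual obstacle; the work is essentially bookkeeping. The one delicate point is the branch accounting: correctly relating $\log(-z)$, $\log(z-8)$ and $\log\tfrac{z(z-8)}{16}$, selecting the sign $\pm$ in Proposition~\ref{191230} from the sign of $\mathop{\mathrm{Re}}(4-z)$, and keeping the two changes of variables $w=z/4$ (near $z=0$) and $w=(8-z)/4$ (near $z=8$) straight, so that both endpoint thresholds are seen to give one and the same identity.
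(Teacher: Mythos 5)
Your proposal is correct and follows essentially the same route as the paper: since both theorems expand the same function near the endpoint thresholds, the coefficient of $\log(-z)$ (resp.\ $\log(z-8)$) is uniquely determined, and extracting it from Theorem~\ref{19080717} via Proposition~\ref{191230} and from Theorem~\ref{1908311828}, with $w=z/4$ (resp.\ $w=(8-z)/4$), yields exactly the asserted identities — this is precisely the argument ``similarly to Corollary~\ref{191020}'' whose computations the paper omits. Your bookkeeping of the signs $\pm$, the branch relation $\log\frac{z(z-8)}{16}=\log(-z)+(\text{analytic})$, the reduction to $(m,l)\in\mathbb N_0^2$ by symmetry, and the extension to $|w|<1$ by analyticity all check out.
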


\begin{proof}
We can prove the assertion similarly to Corollary~\ref{191020}. 
However, it requires fairly long computations, and we omit it. 
\end{proof}
\begin{remark*}
In contrast to Corollary~\ref{191020} we have no idea how 
we could possibly directly prove the identities from Corollary~\ref{191021}. 
However, they can be verified using the computer algebra system Maple, in this case for fixed but arbitrary values of $m$ and $l$.
\end{remark*}

\section{Renormalized expectation of random work}\label{191219}

\begin{proof}[Proof of \eqref{190805}]
It is elementary to compute $P(X_k=n)$. In fact, we have 
\begin{align*}
P(X_k=n)
=\frac1{(2d)^k}\sum_{|\alpha|=(k-|n|)/2}\frac{k!}{\alpha!\prod_{j=1}^d\bigl[(\alpha_j+|n_j|)!\bigr]}
\end{align*}
if $k\ge |n|$ and $k-|n|$ is even, and 
\begin{align*}
P(X_k=n)=0
\end{align*}
otherwise.
Then we obtain 
\begin{align*}
\sum_{k=0}^\infty \left(\frac{2d}{2d-z}\right)^kP(X_k=n)
&
=
\sum_{\genfrac{}{}{0pt}{}{k\ge |n|,}{k-|n|\text{:even}}} 
\sum_{|\alpha|=(k-|n|)/2}
\frac{k!}{\alpha!\prod_{j=1}^d\bigl[(\alpha_j+|n_j|)!\bigr]}
(2d-z)^{-k}
\\&
=
\sum_{\alpha\in\mathbb N_0^d} 
\frac{(2|\alpha|+|n|)!}{\alpha!\prod_{j=1}^d\bigl[(\alpha_j+|n_j|)!\bigr]}
(2d-z)^{-2|\alpha|-|n|},
\end{align*}
and hence \eqref{190805} follows by Theorem~\ref{181110}.
\end{proof}

\bigskip
\noindent
\subsubsection*{Acknowledgements} 
KI would like to thank Professors Naotaka Kajino, 
Hideshi Yamane and Kanam Park for valuable comments. 
This work was initiated and finalized during KI's visits to Aarhus University,
and he would like to thank for their kind hospitality. 
KI was partially supported by JSPS KAKENHI Grant Number 17K05325.
The authors were partially supported by the Danish Council for Independent Research $|$ Natural Sciences, Grants DFF--4181-00042 and DFF--8021-0084B.

\end{document}